\documentclass[aps,prb,preprint,preprintnumbers,tightenlines,amsmath,amssymb,noshowkeys,nofootinbib,floatfix]{revtex4-2}
\makeatletter
\def\frontmatter@abstract@produce{%
  \par
  \addvspace{\frontmatter@preabstractspace}%
   \begingroup
    \dimen@\baselineskip
    \setbox\z@\vtop{\unvcopy\absbox}%
    \advance\dimen@-\ht\z@\advance\dimen@-\prevdepth
    \@ifdim{\dimen@>\z@}{\vskip\dimen@}{}%
   \endgroup
   \begingroup
    \prep@absbox
    \unvbox\absbox
    \post@absbox
   \endgroup
  \@ifx{\@empty\mini@notes}{}{\mini@notes\par}%
  \addvspace\frontmatter@postabstractspace
}%
\makeatother
\usepackage{mathtools}
\usepackage{microtype}
\usepackage{amsthm}
\newtheorem{defn}{defn}
\newtheorem{theorem}{Theorem}
\newtheorem{corollary}{Corollary}

\usepackage{derivative}          
\usepackage{braket}              

\newcommand{\rd}{\mathrm{d}}

\newcommand{\acomm}[2]{\{\,#1,#2\,\}}
\newcommand{\slashed}[1]{\ooalign{\hfil/\hfil\cr$#1$\cr}}
\newcommand{\tg}{\tilde{g}}
\newcommand{\ts}{\tilde{s}}

\usepackage{xcolor} 
\usepackage{hyperref}
\hypersetup{
  colorlinks = true,
  linkcolor  = blue!60!black,
  citecolor  = green!50!black,
  urlcolor   = blue!60!black,
  pdftitle   = {The 1+1D N=2 Wess-Zumino model},
  pdfauthor  = {John Dawson}
}

\begin{document}
\title{The 1+1D $\mathcal{N}{=}2$ Wess--Zumino model}
\author{John F. Dawson}
\email[Email: ]{john.dawson@mac.com}
\affiliation{
   Department of Physics,
   University of New Hampshire,
   Durham, NH 03824, USA}
\date{\today}
\begin{abstract}
We develop the 1+1D $\mathcal{N}{=}2$ Wess--Zumino model with a central extension, starting from Wigner's theorem and the supertranslation group.  The unitary representations of the group leads to the SUSY algebra of the generators and an extension of the group to include a central charge $Z$.  We then derive a general boson/fermion mass-degeneracy theorem and construct the covariant derivative and a candidate action realizing this algebra, first for the vacuum ($Z=0$) sector and then for $Z\neq0$, where the central extension's own coordinate $\xi$ is treated explicitly. Compactifying $\xi$ gives a Kaluza--Klein tower of massive multiplets with quantized central charge, while the kink survives intact as the tower's zero mode, saturating the general BPS bound $M\geq|Z|$ derived along the way. This is a pedagogical re-derivation, aimed at a reader's first encounter with these ideas, not a new result.
\end{abstract}
\maketitle

%
%
\section{Introduction}
\label{s:intro}

In ordinary spacetime, invariance under translations means that it does not matter whether an experiment is performed in Seattle or in Moscow, or at what time of day: physics depends only on separations, not on absolute position. This invariance is why momentum and energy are conserved. Superspace extends spacetime by appending a Grassmann (anticommuting) coordinate for each real supercharge --- two of them, $\theta^1,\theta^2$, for the $\mathcal{N}{=}2$ algebra treated here. These are not new directions one could travel to; they are bookkeeping coordinates that carry fermionic content alongside the bosonic $x^\mu$. What ties $\theta^\alpha$ to $x^\mu$ is not that one drags the other along pointwise, but that the \emph{group multiplication law} mixes them: composing two supertranslations, first by $\chi_1^\alpha$ and then by $\chi_2^\alpha$, does not just add the two $\theta$-shifts --- it also forces an ordinary spacetime translation, quadratic in $\chi_1,\chi_2$ (\S\ref{ss:composition}). That is the precise, coordinate-level sense in which supersymmetry is a ``square root'' of spacetime translations.

The supertranslation group acts on $(x^\mu,\theta^\alpha)$ together, so a supertranslation is a translation along $\theta$ as well. Concretely, ``moving along $\theta$'' means converting bosonic content into fermionic content and back, exactly as $\mathcal{Q}_\alpha$ is found to do in \S\ref{ss:supermultiplets}. The direct analogue of energy-momentum conservation is that $\mathcal{Q}_\alpha$ itself, the Noether charge of this $\theta$-translation invariance, is conserved. Because $\mathcal{Q}_\alpha$ also commutes with $\mathcal{P}_\mu$, converting a state's statistics without touching its momentum, this conservation carries an immediate corollary: a boson and its fermion partner --- related by $\mathcal{Q}_\alpha$ --- are forced to share the same mass (\S\ref{ss:supermultiplets}). It does not matter whether one measures the boson or its superpartner --- the mass that comes back is the same.

The central charge $Z$ that appears once this algebra is extended plays two distinct roles in what follows, and the paper is organized to keep them visibly separate. The first is purely algebraic: before any Lagrangian is written down, \S\ref{s:superspace} derives the extended algebra from group theory alone and shows $Z$ is itself an independent Casimir of that algebra, on equal footing with $P^\mu P_\mu$, entering the mass-degeneracy theorem that constrains the spectrum of \emph{any} unitary representation. The second role only emerges once a specific action is chosen: \S\ref{s:coDer} builds the covariant derivative carrying the algebra into dynamics, and \S\ref{s:action} then realizes $Z$ concretely in two cases side by side --- a vacuum sector with $Z=0$ (\S\ref{ss:vacuum}), where the component Lagrangian, equations of motion, and mass spectrum all fall out of a single superspace equation; and a $Z\neq0$ sector (\S\ref{ss:Znotzero}) where the central extension's own superspace coordinate $\xi$ is taken seriously rather than discarded, so that $Z$ becomes the topological charge of a kink saturating the BPS bound $M\geq|Z|$ and, on compactifying $\xi$, the quantized Kaluza--Klein momentum organizing a tower of massive multiplets. Three appendices collect supporting material used throughout rather than interrupting the main line of argument: Appendix~\ref{s:notation} fixes notation and conventions, Appendix~\ref{s:grassmann} reviews the Grassmann calculus these derivations lean on, and Appendix~\ref{s:IBPextended} proves the integration-by-parts lemma \S\ref{ss:Znotzero} needs for the extended superspace coordinate $\xi$.

None of this is new. The general construction --- a chiral superfield with a superpotential --- is Wess and Zumino's, written down in four dimensions more than fifty years ago~\cite{WessZumino1974}; its central charge and the BPS kink that saturates it are Witten and Olive's, from almost as long ago~\cite{WittenOlive1978}; Gates, Grisaru, Ro\v{c}ek, and Siegel's superspace treatise is already over four decades old~\cite{GatesGrisaruRocekSiegel1983}; and Wess and Bagger's textbook account has itself now passed thirty years~\cite{WessBagger1992}. What follows is a pedagogical re-derivation, not a new result --- but building the algebra up from Wigner's theorem and the supertranslation group alone, with no Lagrangian in sight until \S\ref{s:action}, may be a clearer route into what supersymmetry actually \emph{is} than the more common ``write down an invariant Lagrangian and verify it'' approach.

%
%
\section{1+1D $\mathcal{N}=2$ Superspace}
\label{s:superspace}

We develop the $\mathcal{N}{=}2$ supersymmetry algebra for the general case of real supercharges in 1+1 dimensions.  Superspace consists of
\begin{equation}\label{e:superspace-def}
   s = ( x^\mu,\theta^{\alpha} )\>,\quad
   x^{\mu} = (t,x)\>,\quad
   \theta^{\alpha} = (\theta^1,\theta^2) \>,
\end{equation}
where $(t,x)$ are real variables and $(\theta^1,\theta^2)$ are real Grassmann variables.  We choose a metric with the signature $(-,+)$.  Our notation is explained in Appendix~\ref{s:notation} and properties of Grassmann variables are discussed in Appendix~\ref{s:grassmann}.

%
%
\subsection{The supertranslation group and its composition rule}
\label{ss:composition}

A SUSY translation is labeled by a pair $g=(a^\mu,\chi^\alpha)$: a spacetime translation $a^\mu$ together with a constant Grassmann spinor $\chi^\alpha$. It acts on superspace coordinates $s = (x^\mu,\theta^\alpha)$ by
\begin{equation}\label{e:groupaction}
   g: \qquad 
   x^\mu \to x^\mu + a^\mu - i\,\bar\chi\gamma^\mu\theta\,,
   \qquad \theta^\alpha\to\theta^\alpha+\chi^\alpha\,.
\end{equation}
where $\theta$ and $\chi$  are two-component Majorana spinors with $\bar{\chi} = \chi^T C$, and where $C$ is the charge-conjugate operator.  Our conventions for the $\gamma^{\mu}$ and associated matrices are given in detail in Appendix~\ref{s:notation}.  In component form, \eqref{e:groupaction} becomes:
\begin{equation}\label{e:groupaction-components}
\begin{aligned}
   t &\to t + a^0 + i\,(\chi^1\theta^1+\chi^2\theta^2)\,, \\
   x &\to x + a^1 - i\,(\chi^1\theta^1-\chi^2\theta^2)\,,
\end{aligned}
\end{equation}
The explicit $i$ multiplying each bracket is required for $t,x$ to
stay real: $\chi^{\alpha}\theta^{\alpha}$ are bilinears of two
\emph{independent} real Grassmann quantities and are anti-Hermitian, so that $i \chi^{\alpha}\theta^{\alpha}$ is real.  

This supertranslation transformation forms a group.  To show this, apply $g_1=(a_1,\chi_1)$ first and then $g_2=(a_2,\chi_2)$ to a generic point in superspace:
\begin{equation}\label{e:g1g2g3}
\begin{aligned}
   x^\mu
   &\xrightarrow{\ g_1\ } 
      x^\mu + a_1^\mu - i\bar\chi_1\gamma^\mu\theta 
   \xrightarrow{\ g_2\ }
      x^\mu + a_1^\mu - i\bar\chi_1\gamma^\mu\theta
            + a_2^\mu - i\bar\chi_2\gamma^\mu(\theta+\chi_1) \\
   &\quad\quad = x^\mu+(a_1+a_2)^\mu - i\bar\chi_2\gamma^\mu\chi_1
             - i(\bar\chi_1 + \bar\chi_2)\gamma^\mu\theta
   \equiv x^{\mu}_3 - i \bar\chi_3 \gamma^{\mu} \theta \>.
\end{aligned}
\end{equation}
So the \textbf{composition rule} of the supertranslation group is given by:
\begin{equation}\label{e:compositionrule}
    a_3^\mu = a_1^\mu + a_2^\mu - i\,\bar\chi_2\gamma^\mu\chi_1\,,
   \qquad
   \chi_3^\alpha = \chi_1^\alpha + \chi_2^\alpha\,.
\end{equation}
Associativity follows by applying \eqref{e:groupaction} three times; we omit the bookkeeping.

%
%
\subsection{Unitary representation on the Hilbert space of fields}
\label{ss:unitaryrep}

In superspace coordinates, a state vector $\ket{\Psi}$ is given by the inner product: $\Psi(s) = \braket{s|\Psi}$.  A SUSY transformation $g$ that takes $s$ to $s'$ is written as $s'=g(s)$.  Then $\Psi(s') = \braket{s'|\Psi} = \braket{g(s)|\Psi} = \braket{s|\Psi(g)}$.  Now let U(g) be the SUSY which takes a vector $\ket{\Psi}$ in the ray $\mathcal{R}$ to a vector $\ket{\Psi(g)}$ in the ray $\mathcal{R}(g)$ as:
\begin{equation}\label{e:UPsitoPsi}
   \ket{\Psi(g)} = U(g) \, \ket{\Psi} \>.
\end{equation}
The state vector in superspace coordinates transforms then as:
\begin{equation}\label{e:Psi-s-rep}
   \Psi(s') = \braket{s| U(g) | \Psi} \>.
\end{equation}
This representation property does not depend on the specific coordinate convention used above to motivate $\ket{\Psi(g)}$: physically, performing the symmetry transformation $g_1$ and then $g_2$ must be the same operation on the system as performing the single transformation $g_2\circ g_1$ directly, so the operators implementing them can differ by no more than a phase. By Wigner's theorem, that phase is all the freedom there is --- $U(g)$ need only furnish a \emph{projective} representation of the composition rule \eqref{e:compositionrule}, $U(g_2)U(g_1)=e^{i\phi(g_2,g_1)}U(g_2\circ g_1)$, with the phase $\phi$ constrained by a cocycle condition:
\begin{equation}\label{e:cocycle}
   \phi(g_2,g_1)+\phi(g_3,g_2\circ g_1) = \phi(g_3,g_2)+\phi(g_3\circ g_2,g_1)\,.
\end{equation}
The composition rule \eqref{e:compositionrule} shows that $a^\mu$ enters only through ordinary addition, together with one term quadratic in $\chi$; any dependence of $\phi$ on $a$ alone would just be a phase on translations, removable by redefining $U(g)\to e^{iK_\mu a^\mu}U(g)$, so we look for a $\phi$ built purely from $\chi_1,\chi_2$. Each component of $\chi^\alpha$ is Grassmann and squares to zero, so the lowest nontrivial choice is bilinear, giving the ansatz:
\begin{equation}\label{e:Mphase}
   \phi(g_2,g_1)=i\,\chi_2^\alpha M_{\alpha\beta}\chi_1^\beta \>,
\end{equation}
as can easily be checked.  Also in order for the phase to be real, $M$ must be real.  Next write $M$ as the sum of diagonal, symmetric-traceless, and antisymmetric parts, 
\begin{equation}\label{e:M-symmetric-traceless}
\begin{aligned}
   M = \begin{pmatrix} K_0 + K_1 & Z+\kappa \\ Z-\kappa & K_0 - K_1 \end{pmatrix}
   &=
   K_0 \sigma_0 + K_1 \sigma_3 + i \kappa\, \sigma_2 + Z\, \sigma_1 \\
   &=
   - K_0 C \gamma^0 + K_1 C \gamma^1 + \kappa \, \gamma^0 + Z\, \gamma^1 \>,
\end{aligned}
\end{equation}
so that (\ref{e:Mphase}) becomes
\begin{equation}\label{e:Phase-value}
   \phi(g_2,g_1)
   = i\, K_1 \, (\bar{\chi}_2\, \gamma^{1}\, \chi_1)
     - i\, K_0 \, (\bar{\chi}_2\, \gamma^{0}\, \chi_1)
     + i\, \kappa \, (\bar\chi_2\, \chi_1) + i\, Z \, (\chi_2^T\,\gamma^1\, \chi_1) \>.
\end{equation}
But by the composition rule \eqref{e:compositionrule}, the $\kappa$ term can be written as
\begin{equation}\label{e:Phase-remove1}
   i\, \kappa \, (\bar\chi_2\, \chi_1)
   =
   \frac{i\,\kappa}{2} \, [\, (\bar\chi_3\, \chi_3) - (\bar\chi_1\, \chi_1) - (\bar\chi_2\, \chi_2) \,] \>,
\end{equation}
which is a trivial phase that can be removed by a phase change.  Similarly, the $K_0,K_1$ terms in \eqref{e:Phase-value} become:
\begin{equation}\label{e:phase1-remove2}
   K_1\, (\, a_1^1 + a_2^1 - a_3^{1} \,) - K_0\, (\, a_1^0 + a_2^0 - a_3^{0} \,) \>.
\end{equation}
which can be compensated for by shifting the momentum components, $P_0\to P_0+K_0$ and $P_1\to P_1-K_1$, via the phase redefinition:
\begin{equation}\label{e:Pa-phase}
   U(g) \to e^{i\,(K_0\, a^0 - K_1\, a^1)} \, U(g) \>.
\end{equation}
So from \eqref{e:Phase-value} we are left with
\begin{equation}\label{e:phase-final}
   \phi(g_2,g_1) = i\, Z \, (\chi_2^T\,\gamma^1\, \chi_1) \>.
\end{equation}

%
%
\subsection{Extension of the SUSY group}
\label{ss:SUSY-extension}

The phase $\phi(g_2,g_1)$ in \eqref{e:phase-final} is linear in the central charge $Z$ which suggests promoting $Z$ to a group generator.  We introduce a new real group parameter $\zeta$ which is conjugate to the central charge $Z$, and extend the SUSY group with the additional element: $\tg = (a^{\mu},\chi^{\alpha},\zeta)$.  In addition to the transformation rules in \eqref{e:compositionrule}, the extended group composition rules are now given by:
\begin{equation}\label{e:tg-composition}
\begin{aligned}
   a_3^\mu &= a_1^\mu + a_2^\mu - i\,\bar\chi_2\gamma^\mu\chi_1 \,,
   \\
   \zeta_3 &= \zeta_2 + \zeta_1 + i \, (\chi_2^T \gamma^1 \chi_1) \,,
   \\
   \chi_3^\alpha &= \chi_1^\alpha + \chi_2^\alpha\,.
\end{aligned}
\end{equation}
Superspace is also extended to include an addition real coordinate $\xi$ which now becomes: $\ts = (x^{\mu},\theta^{\alpha},\xi)$.  Now when a SUSY translation acts on $\tilde{s}$, the result is:
\begin{equation}\label{e:SUSY-extended-translation}
\begin{aligned}
   x^\mu &\to x^{\mu} + a^{\mu} - i \bar{\chi} \gamma^{\mu} \theta \,,
   \\
   \xi &\to \xi + \zeta + i \, \chi^T \gamma^1 \theta \,,
   \\
   \theta^\alpha &\to \theta^\alpha + \chi^\alpha\,.
\end{aligned}
\end{equation}
Since $\theta,\chi$ each carry mass dimension $-\tfrac12$ (so that $\bar\chi\gamma^\mu\theta$, appearing in $\delta x^\mu$ above, has the same dimension as $x^\mu$ itself), the matching term $\chi^T\gamma^1\theta$ in $\delta\xi$ forces $\xi$ to carry dimension $-1$ as well --- a length, exactly like $x^\mu$, not a dimensionless angle. This is worth flagging now: it means compactifying $\xi$ later (\S\ref{ss:Znotzero}) needs an honest length scale, since $\xi$ is dimensionful in precisely the same way $x$ is, not a free-floating group parameter that happens to look angle-like.

A simple calculation shows that the composition rules \eqref{e:tg-composition} are satisfied by this extended superspace.  Infinitesimal changes of the superspace coordinates are generated by:
\begin{equation}\label{e:SUSY-extended-infintesimal}
\begin{aligned}
   \Delta x^\mu &= \Delta a^{\mu} - i \Delta \bar{\chi} \gamma^{\mu} \theta \,,
   \\
   \Delta \xi &= \Delta \zeta + i \, \Delta \chi^T \gamma^1 \theta \,,
   \\
   \Delta \theta^\alpha &= \Delta \chi^\alpha\,.
\end{aligned}
\end{equation} 

The most general function in this superspace is of the form,
\begin{equation}\label{e:Phi-def}
\begin{aligned}
   \Phi(\ts)
   &= \phi(x,\xi) + \bar\theta\psi(x,\xi) + \tfrac{i}{2}\,\bar\theta\theta\,F(x,\xi) \,,
   \\
   &= \phi(x,\xi) + [\, \theta^1\psi^2(x,\xi)-\theta^2\psi^1(x,\xi) \,]
      + i\,\theta^1\theta^2\,F(x,\xi) \>,
\end{aligned}
\end{equation}
where $\phi(x,\xi)$ and $F(x,\xi)$ are real functions, and $\psi(x,\xi)$ is a real Majorana Grassmann spinor.

%
%
\subsection{Supersymmetric generators}
\label{ss:generators}

For infinitesimal $\tg$, define the generators of the supersymmetric algebra $P_\mu$ (momentum) and $Q_\alpha$ (supercharge) and $Z$ (central charge) by the relation:
\begin{equation}\label{e:Udef}
   U(1+\Delta \tilde{g}) 
   = 
   1 + i\,\Delta a^\mu P_\mu + \Delta\chi^\alpha Q_\alpha + i\,\Delta \zeta\, Z \cdots\,,
\end{equation}
where $P_{\mu}$ and $Z$ are Hermitian (real, commuting) operators, while $Q_\alpha$ and $\chi^\alpha$ are both \emph{real Grassmann} quantities (Hermitian and anticommuting).  The group has now been extended to incorporate the phase factors.  The group transformation rule now reads:
\begin{equation}\label{e:U1U2toU3}
   U(\tg_1) \, U(\tg_2) = U(\tg_1 \tg_2) \>,
\end{equation}
with no phase factor needed.

To find the algebra of the generators, we consider the following composite transformation:
\begin{equation}\label{e:UinvDUUtoDU}
   U(\tg^{-1}_1) U(1 + \Delta \tg_2) U(\tg_1) = U(1 + \Delta \tg_3) \>,
   \quad
   \Delta \tg_3 = \tg^{-1}_1 \Delta \tg_2 \, \tg_1 \>.
\end{equation}
where 
\begin{equation}\label{e:ggpdg}
\begin{aligned}
   \tg_1 &= (a^{\mu}_1,\chi^{\alpha}_1,\zeta_1) \>,
   \\
   \tg^{-1}_1 &= (-a^{\mu}_1, -\chi^{\alpha}_1,-\zeta_1) \>,
   \\
   \Delta \tg_2 &= (\Delta a^{\mu}_2,\Delta\chi^{\alpha}_2,\Delta\zeta_2) \>.
\end{aligned}   
\end{equation}
Then using the extended group composition rules \eqref{e:tg-composition}, after a little work, we find:
\begin{equation}\label{e:ggg}
\begin{aligned}
   \Delta a^{\mu}_3 &= \Delta a_2^\mu + 2i\,\bar\chi_1\gamma^\mu\Delta\chi_2
   \\
   \Delta \chi^{\alpha}_3 &= \Delta\chi_2^\alpha
   \\
   \Delta \zeta_3 &= \Delta\zeta_2 + 2i\,(\Delta\chi_2^T\gamma^1\chi_1)
\end{aligned}
\end{equation}

So then \eqref{e:UinvDUUtoDU} becomes:
\begin{equation}\label{e:ggU}
\begin{aligned}
   &1
   +
   i\,\Delta a^\mu_2 U(\tg_1^{-1}) P_\mu U(\tg_1)
   +
   \Delta\chi^\alpha_2 U(\tg_1^{-1}) Q_\alpha U(\tg_1)
   +
   i\,\Delta \zeta_2\, U(\tg_1^{-1}) Z U(\tg_1) \cdots\,,
   \\
   & \qquad
   =
   1
   +
   i\,\Delta a^\mu_3 P_\mu
   +
   \Delta\chi^\alpha_3 Q_\alpha
   +
   i\,\Delta \zeta_3 \, Z \cdots\,,
\end{aligned}
\end{equation}

Substituting \eqref{e:ggg} into the right-hand side of \eqref{e:ggU} and comparing the coefficients of the independent infinitesimal parameters $\Delta a_2^\mu,\Delta\chi_2^\alpha,\Delta\zeta_2$ on the two sides gives the transformation law of each generator under conjugation by the finite element $\tg_1$.

\begin{equation}\label{e:PZQtransform}
\begin{aligned}
   U(\tg_1^{-1})\,P_\mu\,U(\tg_1) &= P_\mu\,,
   \\
   U(\tg_1^{-1})\,Z\,U(\tg_1) &= Z\,,
   \\
   U(\tg_1^{-1})\,Q_\alpha\,U(\tg_1) &= Q_\alpha + 2\,(C\gamma^\mu\chi_1)_\alpha\,P_\mu - 2\,(\gamma^1\chi_1)_\alpha\,Z\,.
\end{aligned}
\end{equation}
Using $C\gamma^0=-I$, $C\gamma^1=\sigma_3$, and $\gamma^1=\sigma_1$ (Appendix~\ref{s:notation}), the $Q_\alpha$ result reads component by component as
\begin{equation}\label{e:Qtransform-components}
\begin{aligned}
   U(\tg_1^{-1})\,Q_1\,U(\tg_1) &= Q_1 - 2\chi_1^1\,(P_0-P_1) - 2\chi_1^2\,Z\,,
   \\
   U(\tg_1^{-1})\,Q_2\,U(\tg_1) &= Q_2 - 2\chi_1^2\,(P_0+P_1) - 2\chi_1^1\,Z\,.
\end{aligned}
\end{equation}

The extended SUSY algebra now follows by taking $\tg_1$ itself infinitesimal, $\tg_1\to 1+\Delta\tg_1$, in \eqref{e:PZQtransform}--\eqref{e:Qtransform-components}, and expanding the left side with \eqref{e:Udef} again: $U(1+\Delta\tg_1)^{-1}XU(1+\Delta\tg_1)\approx X+[X,\,i\Delta a_1^\nu P_\nu+\Delta\chi_1^\beta Q_\beta+i\Delta\zeta_1 Z\,]$. Matching coefficients of $\Delta a_1^\nu,\Delta\chi_1^\beta,\Delta\zeta_1$ against the (exact, $\chi_1$-linear) right-hand sides of \eqref{e:PZQtransform} gives
\begin{equation}\label{e:SUSYalgebra-trivial}
   [P_\mu,P_\nu] = 0\,,
   \qquad
   [P_\mu,Q_\alpha] = 0\,,
   \qquad
   [P_\mu,Z] = 0\,,
   \qquad
   [Q_\alpha,Z] = 0\,,
\end{equation}
from the (trivial) $P_\mu,Z$ lines, while the $Q_\alpha$ line --- using $[Q_\alpha,\Delta\chi_1^\beta Q_\beta]=-\Delta\chi_1^\beta\{Q_\alpha,Q_\beta\}$, since $\Delta\chi_1^\beta$ is a Grassmann number anticommuting with the Grassmann-odd $Q_\alpha$ --- gives the one nontrivial relation:
\begin{equation}\label{e:SUSYalgebra}
   \boxed{\;\{Q_\alpha,Q_\beta\} = -2\,(C\gamma^\mu)_{\alpha\beta}\,P_\mu + 2\,(\gamma^1)_{\alpha\beta}\,Z\;}\,.
\end{equation}
(The $[P_\mu,Q_\alpha]=0$ and $[Q_\alpha,Z]=0$ relations reappear here too, as the vanishing $\Delta a_1^\nu$ and $\Delta\mu_1$ coefficients on the right of \eqref{e:PZQtransform}, consistently.) Using $C\gamma^0=-I$, $C\gamma^1=\sigma_3$, $\gamma^1=\sigma_1$, this reads in components
\begin{equation}\label{e:SUSYalgebra-components}
   Q_1^2 = P_0-P_1\,,
   \qquad
   Q_2^2 = P_0+P_1\,,
   \qquad
   \{Q_1,Q_2\} = 2Z\,,
\end{equation}
the standard 2D $\mathcal{N}{=}2$ SUSY algebra with central extension: $Q_1^2+Q_2^2=2P_0$ is twice the Hamiltonian, $Q_2^2-Q_1^2=2P_1$ is twice the spatial momentum, and $\{Q_1,Q_2\}=2Z$ is the central term.

Since $P_\mu$ and $Z$ commute with every generator in \eqref{e:SUSYalgebra-trivial}--\eqref{e:SUSYalgebra}, this algebra has two Casimirs,
\begin{equation}\label{e:Casimirs}
   P^\mu P_\mu
   \qquad\text{and}\qquad
   Z\,,
\end{equation}
and no independent Casimir can be built from the fermionic sector: by \eqref{e:SUSYalgebra-components}, every symmetric bilinear in $Q_\alpha$ already reduces to $P_\mu$ or $Z$, and the one remaining bilinear, the commutator $[Q_1,Q_2]$, fails to commute with $Q_1$ itself.

Since $Q_1^2+Q_2^2=2P_0$ is a sum of squares of Hermitian operators, $P_0\geq0$ on every state, confirming that $P_0$ is the Hamiltonian. With $\eta^{\mu\nu}=\mathrm{diag}(-1,1)$, the first Casimir is $P^\mu P_\mu=-P_0^2+P_1^2$; defining
\begin{equation}\label{e:Mdef}
   M^2 \equiv -P^\mu P_\mu = P_0^2-P_1^2\,,
\end{equation}
a one-particle state of energy $E=P_0$ and momentum $p=P_1$ satisfies $E^2-p^2=M^2$, the standard dispersion relation.

%
%
\subsection{Differential representation of SUSY operators}
\label{ss:DiffRepsSUSY}

For infinitesimal transformations of the extended superspace, from the representation given in \eqref{e:Psi-s-rep}, we have
\begin{equation}\label{e:Psi-DeltaU-transform}
\begin{aligned}
   \Psi(\ts + \Delta\ts) 
   &=
   \braket{\ts|U(1 + \Delta \tg + \cdots) |\Psi}
   \\
   &=
   \braket{\ts|
   1 + i\,\Delta a^\mu P_\mu + \Delta\chi^\alpha Q_\alpha + i\,\Delta \zeta\, Z \cdots | \Psi}
   \\
   &=
   \Psi(\ts) 
   +
   i \Delta a^\mu \braket{\ts|P_{\mu}|\Psi}
   +
   \Delta\chi^\alpha \braket{\ts|Q_{\alpha}|\Psi}
   +
   i\,\Delta \zeta\, \braket{\ts|Z|\Psi}
   +
   \cdots \>.
\end{aligned}
\end{equation}
Now using \eqref{e:SUSY-extended-infintesimal}, the infinitesimal change in $\Psi(x,\theta,\xi)$ is given by
\begin{equation}\label{e:derivesPsi}
\begin{split}
   &\pdv{\Psi(x,\theta,\xi)}{x^{\mu}} \Delta x^{\mu} 
   + 
   \pdv{\Psi(x,\theta,\xi)}{\theta^{\alpha}} \Delta \theta^{\alpha}
   +
   \pdv{\Psi(x,\theta,\xi)}{\xi} \Delta \xi 
   \\
   &=
   \pdv{\Psi(x,\theta,\xi)}{x^{\mu}} \,
   [\, \Delta a^{\mu} - i \Delta \bar{\chi} \gamma^{\mu} \theta \,]
   +
   \pdv{\Psi(x,\theta,\xi)}{\theta^{\alpha}} \Delta \chi^\alpha
   +
   \pdv{\Psi(x,\theta,\xi)}{\xi} \,
   [\, \Delta \zeta + i \, \Delta \chi^T \gamma^1 \theta \,]
\end{split}
\end{equation}
Comparing coefficients of $\Delta a^{\mu}$, $\Delta \chi^{\alpha}$ and $\Delta \zeta$ gives:
\begin{equation}\label{e:PQZ-diff-ops}
\begin{aligned}
   \mathcal{P}_{\mu} \Psi(x,\theta,\xi)
   & \equiv
   \braket{x,\theta,\xi|P_{\mu}|\Psi}
   =
   -i\,\partial_\mu \Psi(x,\theta,\xi)\,,
   \\
   \mathcal{Q}_{\alpha} \Psi(x,\theta,\xi)
   & \equiv
   \braket{x,\theta,\xi|Q_{\alpha}|\Psi}
   =
   \Bigl[\,\partial_\alpha - i\,(C\gamma^\mu\theta)_\alpha\,\partial_\mu + i\,(\gamma^1\theta)_\alpha\,\partial_\xi\,\Bigr]\Psi(x,\theta,\xi)\,,
   \\
   \mathcal{Z} \Psi(x,\theta,\xi)
   & \equiv
   \braket{x,\theta,\xi|Z|\Psi}
   =
   -i\,\partial_\xi \Psi(x,\theta,\xi)\,.
\end{aligned}
\end{equation}

%
%
\subsection{Supermultiplets}
\label{ss:supermultiplets}

The most general wavefunction describing the state of the system is given by an expansion of the form,
\begin{equation}\label{e:Psi-general-form}
   \Psi(x,\theta,\xi) 
   = 
   \phi(x,\xi) + \bar\theta\psi(x,\xi) + \tfrac{i}{2}\,\bar\theta\theta\,F(x,\xi)
\end{equation}
Then the action of the differential operators \eqref{e:PQZ-diff-ops} on $\Psi(x,\theta,\xi)$ is given by:

Since $\partial_\mu$ and $\partial_\xi$ differentiate each component in place without touching $\theta$, $\mathcal{P}_\mu$ and $\mathcal{Z}$ act diagonally on \eqref{e:Psi-general-form}:
\begin{equation}\label{e:PZ-on-Psi}
\begin{aligned}
   \mathcal{P}_\mu\Psi &= -i\,\partial_\mu\phi + \bar\theta\,(-i\,\partial_\mu\psi) + \tfrac{i}{2}\,\bar\theta\theta\,(-i\,\partial_\mu F)\,,
   \\
   \mathcal{Z}\Psi &= -i\,\partial_\xi\phi + \bar\theta\,(-i\,\partial_\xi\psi) + \tfrac{i}{2}\,\bar\theta\theta\,(-i\,\partial_\xi F)\,:
\end{aligned}
\end{equation}
the bosonic content ($\phi,F$) and fermionic content ($\psi$) are each simply differentiated in place; $\mathcal{P}_\mu,\mathcal{Z}$ never mix the two.

As a check, applying $\mathcal{P}_\mu$ \eqref{e:PZ-on-Psi} twice and contracting with $\eta^{\mu\nu}=\mathrm{diag}(-1,1)$,
\begin{equation}\label{e:PP-on-Psi}
   \mathcal{P}_\mu\mathcal{P}^\mu\Psi = \eta^{\mu\nu}(-i\partial_\mu)(-i\partial_\nu)\Psi = (\partial_0^2-\partial_1^2)\,\Psi\,,
\end{equation}
true for any $\Psi$, and again diagonal on $\phi,\psi,F$ since $\mathcal{P}_\mu$ is. Restricting to a state of definite mass, $\mathcal{P}_\mu\mathcal{P}^\mu\Psi=-M^2\Psi$ with $M^2\equiv-P^\mu P_\mu$ \eqref{e:Mdef}, \eqref{e:PP-on-Psi} becomes
\begin{equation}\label{e:KG-on-Psi}
   \bigl(\partial_0^2-\partial_1^2+M^2\bigr)\Psi = 0\,,
\end{equation}
the Klein--Gordon equation, holding identically for every component of $\Psi$ at once --- it falls out directly from the Casimir, with no Lagrangian needed, and applies uniformly across the whole supermultiplet rather than component by component.

$\mathcal{Q}_\alpha$ is different. A direct computation from \eqref{e:PQZ-diff-ops} and \eqref{e:Psi-general-form}, discarding every term with a repeated $\theta$, gives
\begin{equation}\label{e:Q-on-Psi-compact}
\begin{aligned}
   \mathcal{Q}_1\Psi &= \psi^2 + i\theta^1(\partial_0-\partial_1)\phi + i\theta^2(F+\partial_\xi\phi) - i\theta^1\theta^2\bigl[(\partial_0-\partial_1)\psi^1+\partial_\xi\psi^2\bigr]\,,
   \\
   \mathcal{Q}_2\Psi &= -\psi^1 + i\theta^1(\partial_\xi\phi-F) + i\theta^2(\partial_0+\partial_1)\phi - i\theta^1\theta^2\bigl[(\partial_0+\partial_1)\psi^2+\partial_\xi\psi^1\bigr]\,.
\end{aligned}
\end{equation}
Reading off the coefficient landing in each slot of $\Psi$'s own expansion \eqref{e:Psi-general-form} gives the dictionary
\begin{equation}\label{e:Q1-on-Psi}
\begin{aligned}
   \mathcal{Q}_1\Psi\big|_{\phi} &= \psi^2\,,
   &
   \mathcal{Q}_1\Psi\big|_{\psi^1} &= -iF - i\,\partial_\xi\phi\,,
   \\
   \mathcal{Q}_1\Psi\big|_{\psi^2} &= i(\partial_0-\partial_1)\phi\,,
   &
   \mathcal{Q}_1\Psi\big|_{F} &= -(\partial_0-\partial_1)\psi^1 - \partial_\xi\psi^2\,,
\end{aligned}
\end{equation}
\begin{equation}\label{e:Q2-on-Psi}
\begin{aligned}
   \mathcal{Q}_2\Psi\big|_{\phi} &= -\psi^1\,,
   &
   \mathcal{Q}_2\Psi\big|_{\psi^1} &= -i(\partial_0+\partial_1)\phi\,,
   \\
   \mathcal{Q}_2\Psi\big|_{\psi^2} &= -iF + i\,\partial_\xi\phi\,,
   &
   \mathcal{Q}_2\Psi\big|_{F} &= -(\partial_0+\partial_1)\psi^2 - \partial_\xi\psi^1\,,
\end{aligned}
\end{equation}
where $\mathcal{Q}_\alpha\Psi|_X$ denotes the coefficient that lands in slot $X$. (Setting $\partial_\xi\to0$ reduces \eqref{e:Q-on-Psi-compact}--\eqref{e:Q2-on-Psi} to the SUSY transformation used in \S\ref{ss:Znotzero}, as it must.) Every entry sends bosonic content ($\phi,F$) to fermionic ($\psi$) or fermionic to bosonic --- never $\phi\to F$ or $\psi^1\to\psi^2$ directly --- while $\mathcal{P}_\mu,\mathcal{Z}$ \eqref{e:PZ-on-Psi} never cross that line at all.

Since $[\mathcal{P}_\mu,\mathcal{Q}_\alpha]=[\mathcal{Q}_\alpha,\mathcal{Z}]=0$ (the same relations as \S\ref{ss:generators}, now realized as differential operators), acting with $\mathcal{Q}_\alpha$ leaves $p,z$ untouched, so each boson/fermion pair identified in \eqref{e:Q1-on-Psi}--\eqref{e:Q2-on-Psi} --- $\phi\leftrightarrow\psi^2$, $\psi^1\leftrightarrow F$, and so on --- are superpartners with the same $p,z$, hence, by \eqref{e:KG-on-Psi}, the same mass $M$. This is the full content of the boson/fermion mass-degeneracy theorem, and it holds for \emph{any} unitary representation of the algebra \eqref{e:SUSYalgebra-trivial}--\eqref{e:SUSYalgebra} --- no action or Lagrangian has been used.

%
%
\section{The covariant derivative}
\label{s:coDer}

We first need to find out how the differentials transform under the extended SUSY.  Differentials of Eqs.~\eqref{e:SUSY-extended-translation} give the following:
\begin{equation}\label{e:dxptox}
\begin{aligned}
   \rd x^{\mu\prime} 
   &=
   \rd x^{\mu} - i\,\bar{\chi} \gamma^{\mu} \rd \theta \,,
   \\
   \rd \xi' 
   &=
   \rd \xi + i \, \chi^{T} \gamma^1 \rd \theta \,,
   \\
   \rd \theta'
   &=
   \rd \theta \,.
\end{aligned}
\end{equation}
To find how the derivatives transform, where $\partial_\mu = \frac{\partial}{\partial x^\mu}$, first invert the coordinate transformation \eqref{e:SUSY-extended-translation} exactly. Substituting $\theta=\theta'-\chi$ into the $x^\mu$ and $\xi$ relations and using $\bar\chi\gamma^\mu\chi=0$ and $\chi^T\gamma^1\chi=0$ (the same symmetric-matrix self-contraction identities used to verify $\tg_1^{-1}$ in \S\ref{ss:generators}), the cross term drops out and the inverse is exact:
\begin{equation}\label{e:inverse-coords}
   x^\mu = x'^\mu - a^\mu + i\,\bar\chi\gamma^\mu\theta'\,,
   \qquad
   \xi = \xi' - \zeta - i\,\chi^T\gamma^1\theta'\,,
   \qquad
   \theta^\alpha = \theta'^\alpha-\chi^\alpha\,.
\end{equation}
Since $a^\mu,\chi^\alpha,\zeta$ are held fixed, $x^\mu$ and $\xi$ above depend on $x'^\mu$ (resp.\ $\xi'$) only through the identity term, so
\begin{equation}\label{e:dxdxi-transform}
   \partial'_\mu = \partial_\mu\,,
   \qquad
   \partial'_\xi = \partial_\xi\,.
\end{equation}
For $\partial'_\alpha\equiv\partial/\partial\theta'^\alpha$, the left-acting derivative must anticommute past the single power of $\chi$ in each coefficient of \eqref{e:inverse-coords} before it reaches $\theta'^\beta$, picking up a minus sign; using the symmetry $(C\gamma^\mu)^T=C\gamma^\mu$ to write $(\bar\chi\gamma^\mu)_\alpha=(C\gamma^\mu\chi)_\alpha$, this gives
\begin{equation}\label{e:dtheta-transform}
   \boxed{\;\partial'_\alpha = \partial_\alpha - i\,(C\gamma^\mu\chi)_\alpha\,\partial_\mu + i\,(\gamma^1\chi)_\alpha\,\partial_\xi\;}\,.
\end{equation}
So $\partial_\mu$ and $\partial_\xi$ are unaffected by the SUSY transformation, and only the Grassmann derivative picks up a correction --- exactly the structure needed to build a genuine supercovariant derivative $D_\alpha$ in this extended superspace.

\subsection{Constructing $D_\alpha$}

Build $D_\alpha$ by adding to $\partial_\alpha$ the same correction terms found in \eqref{e:dtheta-transform}, but with the fixed parameter $\chi$ replaced by the coordinate $\theta$ itself, and the sign of the correction flipped:
\begin{equation}\label{e:Ddef}
   D_\alpha \equiv \partial_\alpha + i\,(C\gamma^\mu\theta)_\alpha\,\partial_\mu - i\,(\gamma^1\theta)_\alpha\,\partial_\xi\,.
\end{equation}
To check this is covariant, build $D'_\alpha$ the same way out of the primed variables, using $\partial'_\mu=\partial_\mu$, $\partial'_\xi=\partial_\xi$ \eqref{e:dxdxi-transform}, then substitute $\theta'=\theta+\chi$ and \eqref{e:dtheta-transform}:
\begin{equation}
\begin{aligned}
   D'_\alpha
   &= \partial'_\alpha + i\,(C\gamma^\mu\theta')_\alpha\,\partial_\mu - i\,(\gamma^1\theta')_\alpha\,\partial_\xi
   \\
   &= \Bigl[\partial_\alpha - i(C\gamma^\mu\chi)_\alpha\partial_\mu+i(\gamma^1\chi)_\alpha\partial_\xi\Bigr]
      + i(C\gamma^\mu\theta)_\alpha\partial_\mu+i(C\gamma^\mu\chi)_\alpha\partial_\mu
      - i(\gamma^1\theta)_\alpha\partial_\xi-i(\gamma^1\chi)_\alpha\partial_\xi
   \\
   &= \partial_\alpha+i(C\gamma^\mu\theta)_\alpha\partial_\mu-i(\gamma^1\theta)_\alpha\partial_\xi \;=\; D_\alpha\,,
\end{aligned}
\end{equation}
the $\chi$-dependent pieces cancelling exactly against the correction already present in $\partial'_\alpha$. So $D_\alpha$ is form-invariant --- the defining property of a covariant derivative. Using $C\gamma^0=-I$, $C\gamma^1=\sigma_3$, $\gamma^1=\sigma_1$ (Appendix~\ref{s:notation}), in components:
\begin{equation}\label{e:Dcomponents}
   D_1 = \partial_{\theta^1} - i\theta^1(\partial_0-\partial_1) - i\theta^2\partial_\xi\,,
   \qquad
   D_2 = \partial_{\theta^2} - i\theta^2(\partial_0+\partial_1) - i\theta^1\partial_\xi\,.
\end{equation}
Setting $\partial_\xi\to0$ recovers the un-extended covariant derivative used before the central charge was introduced.

\subsection{$D_\alpha$ reproduces the SUSY algebra}

As a check, direct computation (tracking the Grassmann signs exactly as in Appendix~\ref{s:grassmann}) gives
\begin{equation}\label{e:Dalgebra}
   D_1^2 = -i(\partial_0-\partial_1)\,,
   \qquad
   D_2^2 = -i(\partial_0+\partial_1)\,,
   \qquad
   \{D_1,D_2\} = -2i\,\partial_\xi\,.
\end{equation}
With the standard identification $P_\mu=-i\partial_\mu$, and $Z=-i\partial_\xi$ by the same pattern, \eqref{e:Dalgebra} is exactly $Q_1^2=P_0-P_1$, $Q_2^2=P_0+P_1$, $\{Q_1,Q_2\}=2Z$ --- the abstract algebra of \eqref{e:SUSYalgebra-components}, reproduced term for term by this differential-operator realization of $Q_\alpha$.

%
%
\section{The action}
\label{s:action}

Everything up to this point has used only properties of the extended SUSY algebra derived in \S\ref{s:superspace} --- no action or Lagrangian has been written down, yet it already fixed how the spectrum of any theory realizing this algebra must be organized, including the boson/fermion mass-degeneracy theorem of \S\ref{ss:supermultiplets}. We now give two concrete examples of a model with an action, built from the covariant derivative $D_\alpha$ of \S\ref{s:coDer} --- whose algebra \eqref{e:Dalgebra} was already checked to carry an honest representation of the SUSY algebra --- together with the specific dynamics ($\mathcal{W}$, $\gamma$) that the abstract algebra alone leaves undetermined. The two examples differ by whether the central charge $Z$ vanishes on the ground state or not: \S\ref{ss:vacuum} works out the $Z=0$ case; \S\ref{ss:Znotzero} then takes up $Z\neq0$.

%
%
\subsection{The ground state is the vacuum ($Z=0$)}
\label{ss:vacuum}

In this sector we assume that the ground state is the vacuum, that is $U(\tg) \ket{0} = \ket{0}$.  This means for the whole extended group, every generator annihilates $\ket{0}$: $P_\mu\ket{0}=Q_\alpha\ket{0}=Z\ket{0}=0$. But the algebra forces a consistency condition, since $\{Q_1,Q_2\}=2Z$, having both $Q_1\ket{0}=0$ and $Q_2\ket{0}=0$ requires $Z\ket{0}=0$. So a fully SUSY-invariant vacuum is automatically a $Z=0$ state, and we can ignore the $\xi$ variable.

As in ordinary field theory, write the action as $S[\Psi]=\int\rd^2x\,\rd^2\theta\,\mathcal{L}$, with the superspace Lagrangian
\begin{equation}\label{e:action-Z0}
   \mathcal{L} = \tfrac{\alpha}{2}\,\overline{D\Psi}\,(D\Psi) + i\gamma\,\mathcal{W}(\Psi) \,,
   \qquad
   \alpha,\gamma\in\mathbb{R}\,,
\end{equation}
where $D\Psi\equiv(D_1\Psi,D_2\Psi)^T$ and $\overline{D\Psi}\equiv(D\Psi)^TC$, the same bar used throughout for $\bar\theta\theta,\bar\psi\psi$. Since $D_1\Psi,D_2\Psi$ are each Grassmann-odd and so anticommute, $\overline{D\Psi}\,(D\Psi)=D_1\Psi\,D_2\Psi-D_2\Psi\,D_1\Psi=2\,D_1\Psi\,D_2\Psi$, so
\begin{equation}\label{e:L-Z0-simplified}
   \mathcal{L} = \alpha\,D_1\Psi\,D_2\Psi + i\gamma\,\mathcal{W}(\Psi)\,,
\end{equation}
the form used below. The real constants $\alpha,\gamma$ are not fixed by Hermiticity alone and are pinned down below: $\alpha=-\tfrac12$ by the boson/fermion kinetic sign, while $\gamma$ remains a free coupling.

It is convenient to collect the combination $\gamma/(2\alpha)$, which recurs throughout everything that follows, into a single rescaled superpotential,
\begin{equation}\label{e:Wdef}
   W(\phi) \equiv \frac{\gamma}{2\alpha}\,\mathcal{W}(\phi)\,,
\end{equation}
so that $W',W'',W'''$ always mean $\frac{\gamma}{2\alpha}\mathcal{W}',\frac{\gamma}{2\alpha}\mathcal{W}'',\frac{\gamma}{2\alpha}\mathcal{W}'''$. $W$ carries no content beyond $\mathcal{W}$, $\alpha$, and $\gamma$ --- it is bookkeeping, not a new coupling --- but it keeps the equations of motion below free of repeated fractions.

%
%
\subsubsection{The Super Lagrange equation}
\label{s:combinedEOM-Z0}

We now want to vary the action \eqref{e:action-Z0} to find the equation of motion for the superfield $\Psi$: varying the action directly gives a single combined equation of motion in one step, exactly as in ordinary field theory. The starting point is simply the definition of the first variation,
\begin{equation}\label{e:deltaS-def}
   \delta S = \int\rd^2x\,\rd^2\theta\,\delta\mathcal{L}\,,
\end{equation}
and everything below is just computing $\delta\mathcal{L}$, rewriting the result as $\delta\Psi$ times something, and then setting that something to zero.

Substitute $\Psi\to\Psi+\delta\Psi$ into the kinetic term of \eqref{e:L-Z0-simplified}. Since $D_1,D_2$ are linear, $D_\alpha(\Psi+\delta\Psi)=D_\alpha\Psi+D_\alpha\delta\Psi$, so
\begin{equation}\label{e:PsiPlusDeltaPsi}
   D_1(\Psi+\delta\Psi)\,D_2(\Psi+\delta\Psi)
   = \underbrace{D_1\Psi\,D_2\Psi}_{O(\delta\Psi^0)}
   + \underbrace{D_1\delta\Psi\,D_2\Psi + D_1\Psi\,D_2\delta\Psi}_{O(\delta\Psi^1)}
   + \underbrace{D_1\delta\Psi\,D_2\delta\Psi}_{O(\delta\Psi^2)}\,.
\end{equation}
The zeroth-order term is the unperturbed Lagrangian, and the quadratic term is dropped as higher order in the infinitesimal $\delta\Psi$ --- the same ``first variation'' logic as ordinary calculus of variations. What survives is the first-order piece, $\delta(D_1\Psi\,D_2\Psi)=D_1\delta\Psi\,D_2\Psi+D_1\Psi\,D_2\delta\Psi$; together with the ordinary Taylor expansion $\mathcal{W}(\Psi+\delta\Psi)=\mathcal{W}(\Psi)+\mathcal{W}'(\Psi)\,\delta\Psi+O(\delta\Psi^2)$ of the real function $\mathcal{W}$, this gives
\begin{equation}\label{e:deltaL-Z0}
   \delta\mathcal{L} = \alpha\bigl[D_1\delta\Psi\,D_2\Psi+D_1\Psi\,D_2\delta\Psi\bigr] + i\gamma\,\mathcal{W}'(\Psi)\,\delta\Psi\,.
\end{equation}
The $\mathcal{W}'(\Psi)\,\delta\Psi$ piece is already in the form we need, a single $\delta\Psi$ times something; the two kinetic cross-terms are not, since $\delta\Psi$ sits inside a $D_\alpha$ in each. Moving it out front is a job for integration by parts --- but with $\theta$, not $x$, as the variable being integrated over, so it is \emph{Grassmann} integration by parts: in place of the ordinary total-derivative identity $\int\rd x\,\partial_x(\text{anything})=0$, we need its superspace counterpart for $D_\alpha$.

This is possible because $\int d^2\theta\,\partial_\alpha(\text{anything})=0$ identically (differentiating always lowers the $\theta$-degree, so it can never reach the top component), and since $D_\alpha=\partial_\alpha+\theta\times(\text{spacetime derivative})$, it follows that $\int\rd^2x\,\rd^2\theta\,D_\alpha(X)=0$ for any superfield $X$: the leftover piece is a total $\partial_\mu$, which vanishes on integrating over $x$ provided that piece dies off at $x=\pm\infty$. Here $X$ always carries a factor of $\delta\Psi$, so this only requires $\delta\Psi\to0$ at the spatial boundary with $\Psi$'s own boundary values held fixed --- the same assumption used to drop boundary terms when deriving the ordinary Euler--Lagrange equations from $\delta S=0$. That holds regardless of what $\Psi$ itself does at $x=\pm\infty$, including on a kink background where $\phi$ approaches two different vacua (\S\ref{ss:Znotzero}): it is $\delta\Psi$, not $\Psi$, that is required to vanish at the boundary. So $D_\alpha$ can be integrated by parts inside $\int\rd^2x\,\rd^2\theta$, exactly like an ordinary $\partial_\mu$, with the graded Leibniz rule
\begin{equation}\label{e:gradedLeibniz}
   D_\alpha(AB)=(D_\alpha A)B+(-1)^{|A|}A(D_\alpha B)\,,
\end{equation}
where $|A|$ is the Grassmann parity of $A$: $|A|=0$ if $A$ is Grassmann-even (bosonic) and $|A|=1$ if $A$ is Grassmann-odd (fermionic), so $(-1)^{|A|}=+1,-1$ respectively.

The two kinetic cross-terms are now integrated by parts one at a time using \eqref{e:gradedLeibniz}. For the first, apply $D_1$ to the product $(\delta\Psi)(D_2\Psi)$, with $A=\delta\Psi$ even ($|A|=0$):
\begin{equation}\label{e:IBP-first}
   D_1\bigl[(\delta\Psi)(D_2\Psi)\bigr] = (D_1\delta\Psi)(D_2\Psi) + (\delta\Psi)(D_1D_2\Psi)\,.
\end{equation}
Integrating both sides over $\int\rd^2x\,\rd^2\theta$, the left side vanishes by the total-derivative fact established above, giving
\begin{equation}\label{e:IBP-first-result}
   \int D_1\delta\Psi\,D_2\Psi = -\int\delta\Psi\,D_1D_2\Psi\,.
\end{equation}
For the second, apply $D_2$ to the product $(D_1\Psi)(\delta\Psi)$, with $A=D_1\Psi$ \emph{odd} ($|A|=1$, since $D_1$ is an odd operator acting on the even $\Psi$):
\begin{equation}\label{e:IBP-second}
   D_2\bigl[(D_1\Psi)(\delta\Psi)\bigr] = (D_2D_1\Psi)(\delta\Psi) - (D_1\Psi)(D_2\delta\Psi)\,.
\end{equation}
Integrating and setting the left side to zero,
\begin{equation}\label{e:IBP-second-result}
   \int D_1\Psi\,D_2\delta\Psi = \int (D_2D_1\Psi)(\delta\Psi) = \int\delta\Psi\,D_2D_1\Psi\,,
\end{equation}
where the last step reorders $\delta\Psi$ to the left at no cost, since $D_2D_1\Psi$ (two odd $D$'s composed) and $\delta\Psi$ are both Grassmann-even and so commute freely.

Both kinetic cross-terms in \eqref{e:deltaL-Z0} now have $\delta\Psi$ out front, exactly like the $\mathcal{W}'(\Psi)\,\delta\Psi$ piece. Substituting \eqref{e:IBP-first-result} and \eqref{e:IBP-second-result} into \eqref{e:deltaS-def}, term by term,
\begin{equation}\label{e:deltaS-kinetic}
   \delta S = \int\delta\Psi\,\Bigl[-\alpha\,D_1D_2\Psi \;+\; \alpha\,D_2D_1\Psi \;+\; i\gamma\,\mathcal{W}'(\Psi)\Bigr]\,.
\end{equation}
A direct check, identical in method to \eqref{e:Dalgebra}, shows $\{D_1,D_2\}=0$ exactly for the $Z=0$ covariant derivative (the $\xi$-independent specialization of \eqref{e:Ddef}) --- it is \eqref{e:Dalgebra}'s $\{D_1,D_2\}=-2i\partial_\xi$ with the $\xi$-piece simply absent, so $D_2D_1\Psi=-D_1D_2\Psi$. Substituting this into the middle term only, $\alpha\,D_2D_1\Psi=-\alpha\,D_1D_2\Psi$, so the bracket becomes $-\alpha D_1D_2\Psi-\alpha D_1D_2\Psi+i\gamma\mathcal{W}'(\Psi)=-2\alpha\,D_1D_2\Psi+i\gamma\,\mathcal{W}'(\Psi)$. Setting $\delta S=0$ for arbitrary $\delta\Psi$ gives $D_1D_2\Psi=\frac{i\gamma}{2\alpha}\mathcal{W}'(\Psi)$; using the rescaled superpotential \eqref{e:Wdef}, this is the \textbf{Super Lagrange equation}
\begin{equation}\label{e:combinedEOM}
   \boxed{\;D_1D_2\Psi = i\,W'(\Psi)\;}\,,
\end{equation}
the superspace analogue of the ordinary Euler--Lagrange equation $\delta S/\delta\Psi=0$. Here $W'(\Psi)$ means $W'$ evaluated with $\Psi$ substituted in, then Taylor-expanded in $\theta$ the same way any superfield is --- not a new object, and consistent with the chain rule applied directly to the $i\gamma\,\mathcal{W}(\Psi)$ term in the action, which varies to $i\gamma\,\mathcal{W}'(\Psi)\,\delta\Psi=2i\alpha\,W'(\Psi)\,\delta\Psi$.

%
%
\subsubsection{Components of the Super Lagrange equation}
\label{s:components-Z0}

Casting \eqref{e:combinedEOM} into ordinary component fields needs $D_\alpha$ and $\Psi$ written out explicitly. From \eqref{e:Ddef}, the covariant derivative is
\begin{equation}\label{e:Ddef-Z0}
   D_\alpha = \partial_\alpha + i\,(C\gamma^\mu\theta)_\alpha\,\partial_\mu \>.
\end{equation}
and specializing \eqref{e:Phi-def} to $\xi$-independence (with $\Phi\to\Psi$, since here $\Psi$ is the dynamical field the action is built from), the function $\Psi(x,\theta)$ becomes
\begin{equation}\label{e:Phi-def-Z0}
\begin{aligned}
   \Psi(x,\theta)
   &=
   \phi(x) + \bar\theta\psi(x) + \tfrac{i}{2}\,\bar\theta\theta\,F(x) \,,
   \\
   &=
   \phi(x) + [\, \theta^1\psi^2(x)-\theta^2\psi^1(x) \,]
      + i\,\theta^1\theta^2\,F(x) \>,
\end{aligned}
\end{equation}
where $\phi(x)$ and $F(x)$ are real functions, and $\psi(x)$ is a real Majorana Grassmann spinor.

A direct computation from \eqref{e:Ddef-Z0} and \eqref{e:Phi-def-Z0}, discarding every term with a
repeated $\theta$, gives
\begin{equation}\label{e:D2Phi-Z0}
   D_2\Psi = -\psi^1 - i\theta^1F - i\theta^2(\partial_0+\partial_1)\phi
              + i\theta^1\theta^2(\partial_0+\partial_1)\psi^2 \>.
\end{equation}
Applying $D_1$ once more, by the same rule, gives $D_1D_2\Psi$ directly:
\begin{equation}\label{e:D1D2Psi}
   D_1D_2\Psi = -iF + i\theta^1(\partial_0-\partial_1)\psi^1 + i\theta^2(\partial_0+\partial_1)\psi^2 - \theta^1\theta^2(\partial_0^2-\partial_1^2)\phi\,,
\end{equation}
To get $W'(\Psi)$'s expansion in components, write $\Psi=\phi+\eta$ with $\eta\equiv\bar\theta\psi+\tfrac{i}{2}\bar\theta\theta\,F=\theta^1\psi^2-\theta^2\psi^1+i\theta^1\theta^2F$, and Taylor-expand $W(\Psi)=W(\phi+\eta)$ itself ($\eta^3=0$ identically, only two Grassmann generators):
\begin{equation}\label{e:WPhi-Z0}
   W(\Psi) = W(\phi) + W'(\phi)\,\eta + \tfrac12W''(\phi)\,\eta^2\,.
\end{equation}
The same expansion, applied to the function $W'$ in place of $W$, gives
\begin{equation}\label{e:WprimePhi-Z0}
   W'(\Psi) = W'(\phi) + W''(\phi)\,\eta + \tfrac12W'''(\phi)\,\eta^2\,.
\end{equation}
Since $\eta=\bar\theta\psi$ is purely linear in $\theta$, $\eta^2=(\bar\theta\psi)^2=-2\,\theta^1\theta^2\,\psi^1\psi^2$ (Appendix~\ref{s:grassmann}), so \eqref{e:WprimePhi-Z0} becomes, in full component form,
\begin{equation}\label{e:WprimePhi-components}
   W'(\Psi) = W'(\phi) + \theta^1W''(\phi)\psi^2 - \theta^2W''(\phi)\psi^1 + \theta^1\theta^2\bigl[iW''(\phi)F - W'''(\phi)\psi^1\psi^2\bigr]\,.
\end{equation}
Matching \eqref{e:D1D2Psi} against $i\,W'(\Psi)$ \eqref{e:WprimePhi-components} term by term in $\theta$ reproduces, all at once:
\begin{equation}\label{e:combinedEOM-components}
\begin{aligned}
   \theta^0:&\quad F = -W'(\phi)\,,
   \\
   \theta^1,\theta^2:&\quad (\partial_0\mp\partial_1)\psi^{1,2} = \pm W''(\phi)\,\psi^{2,1}\,,
   \\
   \theta^1\theta^2:&\quad (\partial_0^2-\partial_1^2)\phi = -W'(\phi)W''(\phi) + iW'''(\phi)\,\psi^1\psi^2\,.
\end{aligned}
\end{equation}
The first line is the auxiliary-field relation; the next two are the two fermion equations of motion; the last is the \emph{fully nonlinear} boson equation of motion, carrying a fermion-bilinear back-reaction term that a purely bosonic treatment of $F$ would miss. The single Super Lagrange equation \eqref{e:combinedEOM} thus already contains all of this at once.

%
%
\subsubsection{The mass case ($W'''=0$)}

For $\mathcal{W}(\phi)=\tfrac{m}{2}\phi^2$, so that $W(\phi)=\tfrac{\gamma}{2\alpha}\mathcal{W}(\phi)=\tfrac{\mu}{2}\phi^2$ with $\mu\equiv\gamma m/(2\alpha)$ (giving $W'=\mu\phi$, $W''=\mu$, $W'''=0$; $m$ real), the fermion-bilinear term in \eqref{e:combinedEOM-components} vanishes identically, and the components become
\begin{equation}\label{e:massEOM}
   F=-\mu\phi\,,
   \qquad
   (\partial_0-\partial_1)\psi^1=\mu\psi^2\,,
   \qquad
   (\partial_0+\partial_1)\psi^2=-\mu\psi^1\,,
   \qquad
   \bigl(\partial_0^2-\partial_1^2+\mu^2\bigr)\phi=0\,.
\end{equation}
The last equation is the ordinary Klein--Gordon equation with mass $|\mu|$. To see that the two fermion equations are the 2D Dirac equation, write $\psi=(\psi^1,\psi^2)^T$ and use the real gamma-matrix representation \eqref{e:gamma01}, $\gamma^0=i\sigma_2$, $\gamma^1=\sigma_1$:
\begin{equation}
   \gamma^0\partial_0\psi = \begin{pmatrix}\partial_0\psi^2\\-\partial_0\psi^1\end{pmatrix}\,,
   \qquad
   \gamma^1\partial_1\psi = \begin{pmatrix}\partial_1\psi^2\\ \partial_1\psi^1\end{pmatrix}\,,
\end{equation}
so the Dirac equation $\gamma^\mu\partial_\mu\psi+\mu\psi=0$ reads, component by component,
\begin{equation}
   \begin{pmatrix}(\partial_0+\partial_1)\psi^2+\mu\psi^1\\[2pt](\partial_1-\partial_0)\psi^1+\mu\psi^2\end{pmatrix} = \begin{pmatrix}0\\0\end{pmatrix}\,,
\end{equation}
i.e.\ $(\partial_0+\partial_1)\psi^2=-\mu\psi^1$ and $(\partial_0-\partial_1)\psi^1=\mu\psi^2$ --- exactly the two fermion equations in \eqref{e:massEOM}. So $\psi$ obeys
\begin{equation}\label{e:Diraceq-Z0}
   (\slashed\partial+\mu)\psi=0\,,
   \qquad
   \slashed\partial\equiv\gamma^\mu\partial_\mu\,,
\end{equation}
the 2D Dirac equation for the Majorana spinor $\psi$, with mass $|\mu|$ equal to the boson mass of the Klein--Gordon equation above --- the expected boson/fermion mass degeneracy of an unbroken SUSY multiplet. Since $W'''=0$, there is no fermion back-reaction on $\phi$ at this order: the mass case is exactly linear/free.

%
%
\subsubsection{The interacting case ($W'''\neq0$)}

Take $\mathcal{W}(\phi)=\tfrac m2\phi^2+\tfrac\lambda3\phi^3$, so
\begin{equation}\label{e:Wcubic}
   \mathcal{W}'=m\phi+\lambda\phi^2\,,
   \qquad
   \mathcal{W}''=m+2\lambda\phi\,,
   \qquad
   \mathcal{W}'''=2\lambda\,,
\end{equation}
a nonzero constant as soon as $\lambda\neq0$. The rescaled superpotential \eqref{e:Wdef} inherits this directly: with $\mu\equiv\gamma m/(2\alpha)$ as before and the new constant $W'''\equiv\frac{\gamma}{2\alpha}\mathcal{W}'''=\frac{\gamma\lambda}{\alpha}$,
\begin{equation}\label{e:WcubicRescaled}
   W'(\phi) = \mu\phi+\tfrac12W'''\phi^2\,,
   \qquad
   W''(\phi) = \mu+W'''\phi\,.
\end{equation}
Substituting into \eqref{e:combinedEOM-components} gives
\begin{equation}\label{e:FWcubic}
   F = -W'(\phi) = -\Bigl(\mu\phi+\tfrac12W'''\phi^2\Bigr)\,,
\end{equation}
\begin{equation}\label{e:fermionWcubic}
   (\partial_0\mp\partial_1)\psi^{1,2} = \pm W''(\phi)\,\psi^{2,1} = \pm\bigl(\mu+W'''\phi\bigr)\psi^{2,1}\,,
\end{equation}
\begin{equation}\label{e:bosonWcubic}
   (\partial_0^2-\partial_1^2)\phi + \mu^2\phi = -\tfrac32\mu W'''\,\phi^2-\tfrac12(W''')^2\,\phi^3 + iW'''\,\psi^1\psi^2\,.
\end{equation}
Three concrete effects of turning on $\lambda$, each visible explicitly rather than hidden inside $W',W'',W'''$: the fermion mass in \eqref{e:fermionWcubic} becomes field-dependent, $\mu\to\mu+W'''\phi$, so the fermion sees a position-dependent effective mass once $\phi$ is excited; the boson equation \eqref{e:bosonWcubic} picks up the quadratic and cubic self-interaction terms $-V'(\phi)$ for the quartic potential
\begin{equation}\label{e:Vdef}
   V(\phi) \equiv \tfrac12\,W'(\phi)^2\,,
\end{equation}
whose derivative $V'(\phi)=W'(\phi)W''(\phi)$ is exactly the \eqref{e:combinedEOM-components} boson source before the $\mu^2\phi$ piece is moved to the left-hand side; and it also picks up the fermion back-reaction term $iW'''\psi^1\psi^2$, with coefficient fixed by the same $\lambda$ that sets the cubic term in $\mathcal{W}$ --- not an independent coupling. Setting $\lambda=0$ recovers \eqref{e:massEOM} exactly, confirming $W'''=0$ is special to the purely quadratic case.

Since $V\equiv\tfrac12W'^2$ is manifestly non-negative, its minimum value is exactly $0$, attained wherever $W'(\phi)=0$: at $\phi=0$ for any $\lambda$, and, once $\lambda\neq0$, also at $\phi=-m/\lambda$. Since $F=-W'(\phi)$ \eqref{e:Wdef}, both are honest $F=0$ vacua: turning on the cubic coupling does not break supersymmetry, it only produces a second vacuum exactly degenerate with the first, the two endpoints between which the kink of \S\ref{ss:Znotzero} interpolates.

%
%
\subsection{The ground state is \emph{not} the vacuum ($Z\neq0$)}
\label{ss:Znotzero}

In this case, the action is given by
\begin{equation}\label{e:z1.action}
   S[\Psi] = \int \rd \ts \, \mathcal{L}[\Psi],
   \quad
   \rd \ts \equiv (\rd x,\rd \theta, \rd \xi) \>,
\end{equation}
where now the Lagrangian is given by:
\begin{equation}\label{e:z1.calL}
   \mathcal{L}[\Psi] = \alpha\,( D_1\Psi) \, ( D_2 \Psi) + i\gamma\,\mathcal{W}(\Psi) 
\end{equation}
However in this case the wave function now depends on $(x,\theta,\xi)$, and is of the form:
\begin{equation}\label{e:z1.Phi-def-Z0}
\begin{aligned}
   \Psi(x,\theta,\xi)
   &=
   \phi(x,\xi) + \bar\theta\,\psi(x,\xi) + \tfrac{i}{2}\,\bar\theta\theta\,F(x,\xi) \,,
   \\
   &=
   \phi(x,\xi) + [\, \theta^1\psi^2(x,\xi)-\theta^2\psi^1(x,\xi) \,]
      + i\,\theta^1\theta^2\,F(x,\xi) \>,
\end{aligned}
\end{equation}
where $\phi(x,\xi)$ and $F(x,\xi)$ are real functions, and $\psi(x,\xi)$ is a real Majorana Grassmann spinor.  In this case, the covariant derivative \eqref{e:Ddef}:
\begin{equation}\label{e:z1.Ddef}
   D_\alpha = \partial_\alpha + i\,(C\gamma^\mu\theta)_\alpha\,\partial_\mu - i\,(\gamma^1\theta)_\alpha\,\partial_\xi\,,
\end{equation}
which now contain derivatives with respect to $\xi$.  The same trick used in the $Z=0$ case in Eq~\eqref{e:IBP-second} applies here, namely:
\begin{equation}\label{e:z1.IBP-second}
   D_2\bigl[(D_1\Psi)(\delta\Psi)\bigr] = (D_2 D_1\Psi)(\delta\Psi) - (D_1\Psi)(D_2\delta\Psi)\,.
\end{equation}
Now integrating this equation over $\rd \ts \equiv (\rd x,\rd \theta, \rd \xi)$ gives:
\begin{equation}\label{e:z1.ByParts}
   \int \rd \ts \, (D_1\Psi)(D_2\delta\Psi)
   =
   \int \rd \ts \, (D_2 D_1\Psi)(\delta\Psi)
   -
   \int \rd \ts \, D_2\bigl[(D_1\Psi)(\delta\Psi)\bigr] \>.
\end{equation}
The last term is a surface term; Appendix~\ref{s:IBPextended} proves the general integration-by-parts identity this relies on and evaluates it explicitly for this $X=(D_1\Psi)(\delta\Psi)$ (Corollary~\ref{cor:IBPextended}). Besides the ordinary $x$-boundary piece, already familiar from \S\ref{ss:vacuum} and vanishing there the same way, it now also carries a genuinely new $\xi$-boundary piece,
\begin{equation}\label{e:z1.surfaceterm-main}
   \int\rd x\,\Bigl[\bigl(F-\partial_\xi\phi\bigr)\delta\phi - i\,\psi^2\,\delta\psi^1\Bigr]_{\xi\to-\infty}^{\xi\to+\infty}\,,
\end{equation}
which vanishes only once $\xi$ is given a definite global structure (compactified on a circle, or assigned a falloff condition) --- the loose end flagged in \S\ref{ss:vacuum} when $\xi$ was first introduced.

\paragraph{The extended off-shell Lagrangian.} Carrying out $\int\rd^2\theta\,\mathcal{L}[\Psi]$ directly in components (keeping $\partial_\xi\phi,\partial_\xi\psi\neq0$ throughout) gives
\begin{equation}\label{e:z1.Loffshell}
\begin{aligned}
   \mathcal{L}\big|_{\theta^1\theta^2} = {}& -\alpha(\partial_0\phi)^2+\alpha(\partial_1\phi)^2-\alpha F^2-\gamma\mathcal{W}'(\phi)F \\
   &+ i\alpha\bigl[\psi^1(\partial_0-\partial_1)\psi^1+\psi^2(\partial_0+\partial_1)\psi^2\bigr] - i\gamma\mathcal{W}''(\phi)\psi^1\psi^2 \\
   &+ \alpha(\partial_\xi\phi)^2 + i\alpha\bigl(\psi^1\partial_\xi\psi^2+\psi^2\partial_\xi\psi^1\bigr)\,,
\end{aligned}
\end{equation}
exactly the familiar $Z=0$ off-shell Lagrangian (first two lines, reproducing \eqref{e:combinedEOM-components}'s equations of motion when $\partial_\xi\to0$) plus two new terms (last line): a $\xi$-gradient term for $\phi$ and a matching $\xi$-derivative cross term for the fermions --- exactly what treating $\xi$ as a third coordinate would lead one to expect.

\paragraph{Varying directly in components.} With $S=\int\rd x\,\rd\xi\,\mathcal{L}|_{\theta^1\theta^2}$, the ordinary $x^0,x^1$ boundary terms are the usual ones (vanishing as before); integrating by parts in $\xi$ as well now picks up three new surface terms, one for each field carrying a $\xi$-derivative. $F$ has none --- its equation stays purely algebraic, $F=-W'(\phi)$ \eqref{e:Wdef}, unchanged. Varying $\phi$:
\begin{equation}\label{e:z1.bosonEOM}
   \bigl(\partial_0^2-\partial_1^2-\partial_\xi^2\bigr)\phi = -W'(\phi)W''(\phi) + iW'''(\phi)\,\psi^1\psi^2\,,
\end{equation}
the full nonlinear boson equation with a genuine $\partial_\xi^2\phi$ term now included, together with the surface term $\bigl[2\alpha\,\partial_\xi\phi\,\delta\phi\bigr]_{\xi\to-\infty}^{\xi\to+\infty}$. Varying $\psi^1,\psi^2$ gives
\begin{equation}\label{e:z1.fermionEOM}
   (\partial_0-\partial_1)\psi^1 = -\partial_\xi\psi^2 + W''(\phi)\psi^2\,,
   \qquad
   (\partial_0+\partial_1)\psi^2 = -\partial_\xi\psi^1 - W''(\phi)\psi^1\,,
\end{equation}
together with surface terms $\bigl[i\alpha\,\psi^2\,\delta\psi^1\bigr]_{\xi\to-\infty}^{\xi\to+\infty}$ and $\bigl[i\alpha\,\psi^1\,\delta\psi^2\bigr]_{\xi\to-\infty}^{\xi\to+\infty}$ respectively. Collecting all three fields,
\begin{equation}\label{e:z1.allsurface}
   \int\rd x\,\Bigl[2\alpha\,\partial_\xi\phi\,\delta\phi + i\alpha\,\psi^2\,\delta\psi^1 + i\alpha\,\psi^1\,\delta\psi^2\Bigr]_{\xi\to-\infty}^{\xi\to+\infty}\,,
\end{equation}
three independently-vanishing boundary conditions, each again requiring $\xi$ to be either compactified or assigned a falloff condition.

\paragraph{Compactifying $\xi$: Fourier modes and the Kaluza--Klein tower.} Take $\xi$ compact, $\xi\in[0,L)$, periodic. Every surface term above, being a total $\xi$-derivative, then integrates to zero over one period automatically --- Eqs.~\eqref{e:z1.surfaceterm-main} and \eqref{e:z1.allsurface} vanish identically, with no separate falloff assumption required. Expand each field in Fourier modes,
\begin{equation}\label{e:z1.Fourier}
   \phi(x,\xi) = \sum_{n=-\infty}^{\infty}\phi_n(x)\,e^{iM_n\xi}\,,
   \qquad
   M_n \equiv \frac{2\pi n}{L}\,,
   \qquad n\in\mathbb{Z}\,,
\end{equation}
and likewise for $F,\psi^1,\psi^2$; reality of the $\xi$-dependent fields requires $\phi_n^*=\phi_{-n}$ (and similarly for $F_n,\psi^1_n,\psi^2_n$).

The quantization $M_n=2\pi n/L$ so far is pure Fourier analysis, fixed by periodicity on the circle alone and independent of any physics. What turns it into a statement about the central charge is the already-established identity
\begin{equation}
   \mathcal{Z}\Psi(x,\theta,\xi) \equiv \braket{x,\theta,\xi|Z|\Psi} = -i\,\partial_\xi\Psi(x,\theta,\xi)
\end{equation}
\eqref{e:PQZ-diff-ops}: if $\ket{\Psi}$ is an eigenstate of the abstract operator $Z$ with eigenvalue $z$, projecting $Z\ket{\Psi}=z\ket{\Psi}$ onto $\bra{x,\theta,\xi}$ gives $-i\partial_\xi\Psi=z\Psi$, so \emph{any} solution of this equation for constant $z$ is automatically a $Z$-eigenstate with that eigenvalue. Each mode solves exactly this equation,
\begin{equation}\label{e:z1.Zeigen}
   -i\,\partial_\xi\Bigl[\phi_n(x)\,e^{iM_n\xi}\Bigr] = M_n\,\phi_n(x)\,e^{iM_n\xi}\,,
\end{equation}
so mode $n$ is genuinely, provably a $Z$-eigenstate with eigenvalue $M_n$ --- not assumed, derived. Writing that eigenvalue as $Z_n\equiv M_n$ (the ordinary habit of naming an eigenvalue after its operator) therefore reports a fact rather than making a definition: the mode number $n$ labels the value of the \emph{same} central charge appearing throughout this paper, quantized in units of $2\pi/L$.

For the mass case, $\mathcal{W}(\phi)=\tfrac m2\phi^2$, $W''=\mu$ is constant, so \eqref{e:z1.allsurface}'s building blocks --- Eqs.~\eqref{e:z1.bosonEOM}--\eqref{e:z1.fermionEOM} together with $F=-W'(\phi)$ --- decouple mode by mode with no mixing between different $n$. Substituting \eqref{e:z1.Fourier} and $\partial_\xi\to iM_n$,
\begin{equation}\label{e:z1.Fn}
   F_n = -\mu\,\phi_n\,,
   \qquad
   \mu \equiv \frac{\gamma m}{2\alpha}\,,
\end{equation}
\begin{equation}\label{e:z1.fermionmodes}
   (\partial_0-\partial_1)\psi^1_n = (\mu-iM_n)\,\psi^2_n\,,
   \qquad
   (\partial_0+\partial_1)\psi^2_n = -(\mu+iM_n)\,\psi^1_n\,,
\end{equation}
\begin{equation}\label{e:z1.bosonmodes}
   \bigl(\partial_0^2-\partial_1^2+\mu^2+M_n^2\bigr)\phi_n = 0\,.
\end{equation}
Applying $(\partial_0+\partial_1)$ to the first equation in \eqref{e:z1.fermionmodes} and substituting the second confirms the fermions satisfy the identical mass-shell condition,
\begin{equation}\label{e:z1.fermionKG}
   \bigl(\partial_0^2-\partial_1^2+\mu^2+M_n^2\bigr)\psi^1_n = 0\,,
\end{equation}
so boson/fermion mass degeneracy survives mode by mode. Each level of the tower is thus an ordinary massive multiplet with
\begin{equation}\label{e:z1.KKmass}
   M_{\rm eff}^2 = \mu^2+M_n^2 = \mu^2+Z_n^2\,,
\end{equation}
the mass picking up a contribution from the square of its own central charge --- the generic (non-BPS) mass formula for a Kaluza--Klein tower.

This had to work out so cleanly: $Z$ is central, so by Schur's lemma it takes a single fixed value on any irreducible representation of the algebra, and $[Q_\alpha,Z]=0$ means the supercharges $Q_1,Q_2$ preserve that value exactly --- even though $\mathcal{Q}_\alpha$'s differential form \eqref{e:PQZ-diff-ops} does act through $\partial_\xi$, on a single Fourier mode that just multiplies by the constant $iM_n$, so $Q_1,Q_2$ never mix different $n$ into one another. So the compactified theory is not one multiplet carrying one $Z$; it is an infinite tower of \emph{separate} multiplets, one per $n$, each its own representation of the algebra with its own fixed $Z_n=M_n$. This representation-theoretic statement is exact and holds regardless of interactions --- $n=0$, with $Z_0=0$, is exactly the sector where the topological kink below lives undisturbed by any of it.

What interactions \emph{do} couple is the classical dynamics of a configuration exciting several modes at once, not the representation content itself. For the cubic superpotential \eqref{e:Wcubic}, $W''=\mu+W'''\phi$ is no longer constant, and $F_n=-\bigl[\mu\,\phi_n+\tfrac12W'''\sum_k\phi_k\phi_{n-k}\bigr]$ sources each mode's equation of motion from products of the others, with $n_1+n_2=n_3$ acting as a selection rule on which modes talk to each other. This Fourier decomposition only closes this cleanly, mode by mode, for the linear (mass) case; the interacting dynamics is not worked out here, only flagged.

\paragraph{Kink solutions live entirely in the $n=0$ mode.} Truncating to $\phi_n=\psi^a_n=F_n=0$ for every $n\neq0$ is self-consistent even in the interacting case: the convolution $F_n=-\bigl[\mu\,\phi_n+\tfrac12W'''\sum_k\phi_k\phi_{n-k}\bigr]$ needs \emph{both} $\phi_k$ and $\phi_{n-k}$ nonzero to contribute, so with only $\phi_0\neq0$ every $F_{n\neq0}$ stays exactly zero, and $F_0=-W'(\phi_0)$ is exactly the original, un-extended auxiliary-field relation (likewise for the boson and fermion equations, since every $\xi$-derivative of a $\xi$-independent field vanishes identically). So the $n=0$ sector decouples completely from the Kaluza--Klein tower and obeys exactly the equations of \S\ref{ss:vacuum} --- a kink lives there untouched by anything built in this subsection. This also means $Z$ has two logically independent origins here: the \emph{topological} charge of an $x$-space kink, carried entirely by the $n=0$ mode and needing no $\xi$-dependence at all (worked out below); and the \emph{Kaluza--Klein} charge $Z_n=M_n$ of an excited mode \eqref{e:z1.KKmass}, needing no $x$-space topology at all. A generic configuration can carry both at once, independently.

\paragraph{The kink equations.} For a static ($\partial_0\phi_0=0$), purely bosonic ($\psi=0$) profile, the $n=0$ boson equation reduces to $\partial_1^2\phi_0=W'(\phi_0)W''(\phi_0)=V'(\phi_0)$ (recall $V\equiv\tfrac12W'^2$, \eqref{e:Vdef}), exactly the Euler--Lagrange equation of the static energy functional $E[\phi_0]=\int\rd x\,\bigl[\tfrac12(\partial_1\phi_0)^2+V(\phi_0)\bigr]$. Complete the square:
\begin{equation}\label{e:z1.CompleteSquare}
   E[\phi_0] = \int\rd x\,\tfrac12\bigl(\partial_1\phi_0\mp W'(\phi_0)\bigr)^2 \;\pm\;\int\rd x\,\partial_1\phi_0\,W'(\phi_0)\,.
\end{equation}
The second term is a total derivative, $\int\rd x\,\partial_xW(\phi_0)=W(\phi_0(+\infty))-W(\phi_0(-\infty))\equiv\Delta W$. Since the first term is a perfect square and therefore non-negative, $E[\phi_0]\geq\pm\Delta W$ holds for \emph{either} choice of sign and for \emph{any} static $\phi_0$ sharing these boundary conditions --- not just the energy-minimizing one. Combining both signs and identifying the static energy with the mass $M$,
\begin{equation}\label{e:z1.BPSbound}
   \boxed{\;M\geq|Z|\;}\,,
   \qquad
   Z \equiv \Delta W\,,
\end{equation}
the BPS bound, with $Z$ the topological charge announced above. The name records exactly this argument: Bogomolny~\cite{Bogomolny1976} and, independently, Prasad and Sommerfield~\cite{PrasadSommerfield1975} first showed that completing the square this way turns an energy bound for classical monopoles into an exact statement, saturated precisely on the solutions of the resulting first-order equation rather than the full second-order one; Witten and Olive~\cite{WittenOlive1978} later identified that classical bound with the exact SUSY-algebra statement $M\geq|Z|$ derived here, and a state saturating it is accordingly called a BPS state. Equality holds exactly when the perfect-square term vanishes identically, i.e.\
\begin{equation}\label{e:z1.Bogomolny}
   \boxed{\;\partial_1\phi_0 = \pm\,W'(\phi_0)\;}\,,
\end{equation}
the Bogomolny equation for the kink --- the saturated case of the bound just derived, with minimum energy $E_{\min}=|\Delta W|=|Z|$.

Equation \eqref{e:z1.Bogomolny} is first-order and autonomous, so solving it is an initial-value problem: its fixed points are the zeros of $\mathcal{W}'$, and a kink is a trajectory running between two of them. For the cubic superpotential \eqref{e:Wcubic}, with $\mathcal{W}'=\phi_0(m+\lambda\phi_0)$, the two vacua are $\phi_0=0$ and $\phi_0=-m/\lambda$, and \eqref{e:z1.Bogomolny} becomes, with $k\equiv\pm\gamma/(2\alpha)$, the logistic equation $d\phi_0/dx=k\,\phi_0(m+\lambda\phi_0)$, solved exactly by
\begin{equation}\label{e:z1.kinkprofile}
   \boxed{\;\phi_0(x) = -\frac{m}{2\lambda}\Bigl[1+\tanh\Bigl(\tfrac{mk}{2}(x-x_0)\Bigr)\Bigr]\;}\,,
\end{equation}
interpolating smoothly between the two vacua, with $x_0$ an arbitrary modulus and $2/|mk|$ the kink's width; Figure~\ref{fig:kink} plots it for $m=\lambda=\gamma=1$, $\alpha=-\tfrac12$ ($k=1$), $x_0=0$.

\begin{figure}[htbp]
   \centering
   \includegraphics[width=0.6\linewidth]{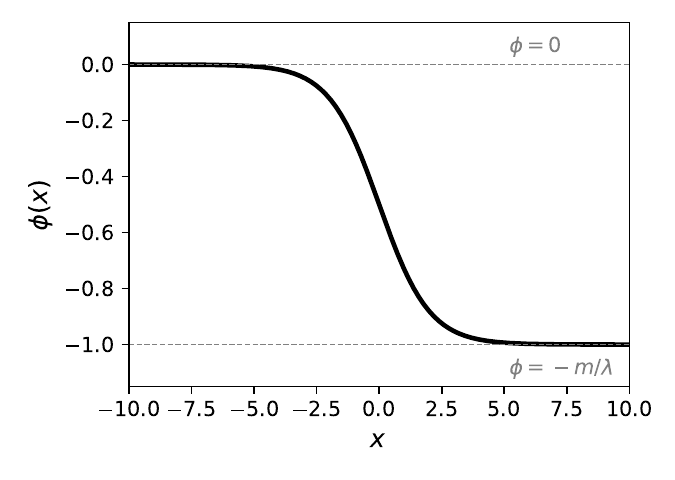}
   \caption{The kink profile \eqref{e:z1.kinkprofile}, living entirely in the $n=0$ mode, for $m=\lambda=\gamma=1$, $\alpha=-\tfrac12$, $x_0=0$, interpolating between the two vacua $\phi_0=0$ and $\phi_0=-m/\lambda=-1$.}
   \label{fig:kink}
\end{figure}

Fermion fluctuations around this background --- in particular, whether the sign change in $W''(\phi_0)$ between the two vacua produces a normalizable zero-energy bound state, as in the classic Jackiw--Rebbi construction~\cite{JackiwRebbi1976} --- are not addressed here.

%
%
\section{Conclusion}
\label{s:conclusion}

We built the 1+1D $\mathcal{N}{=}2$ Wess--Zumino model with a central extension from the top down: algebra first, dynamics second. \S\ref{s:superspace} derived the extended supertranslation group and its algebra from Wigner's theorem alone, with no action in sight, already fixing the Casimir invariants \eqref{e:Casimirs} and the boson/fermion mass-degeneracy theorem for any unitary representation; \S\ref{s:coDer} then built the covariant derivative $D_\alpha$ carrying that algebra. \S\ref{ss:vacuum} showed a single superspace equation, the Super Lagrange equation \eqref{e:combinedEOM}, contains the auxiliary-field relation and the full nonlinear fermion and boson equations of motion at once, for both the mass case and an interacting cubic superpotential. \S\ref{ss:Znotzero} then took $Z\neq0$ seriously by keeping the central extension's own coordinate $\xi$ explicit: this produces a new surface term and an explicit $\mathcal{Z}\Psi$ correction to the equations of motion (with the auxiliary-field relation unexpectedly protected from it), and compactifying $\xi$ turns that surface term into a Kaluza--Klein tower of separate supermultiplets, one per quantized eigenvalue $Z_n=2\pi n/L$, each with $M^2=\mu^2+Z_n^2$. The kink survives this construction untouched in the $n=0$ mode, saturating the general BPS bound $M\geq|Z|$ derived independently by completing the square in its static energy.

What is left open: the interacting superpotential's mode-mixing dynamics on the Kaluza--Klein tower (\S\ref{ss:Znotzero}) were set up but not solved. If the pedagogical case made in the Introduction is right, though, the value of this exercise was never in reaching a new result --- it was in seeing how much of the structure (mass degeneracy, the BPS bound, the shape of a central charge) was already fixed before a single component field was ever written down.

As a speculative aside, \S\ref{ss:Znotzero}'s construction has a particular shape: a compact extra coordinate, a resulting tower of states labeled by the conserved quantum number $Z_n$, and a distinguished zero mode that carries none of that quantum number itself but is where the topological kink lives. That is the same shape as the Kaluza--Klein dark matter scenario considered in models of Universal Extra Dimensions~\cite{AppelquistChengDobrescu2001}, where a discrete ``KK-parity'' stabilizes the lightest Kaluza--Klein particle as a dark matter candidate~\cite{ServantTait2003}, playing a role analogous to $R$-parity and the lightest supersymmetric particle in more conventional SUSY dark matter scenarios. The analogy should not be pushed far: $\xi$ here is bookkeeping for a central extension of the SUSY algebra, not a literal extra spatial dimension, and nothing in this note addresses whether any such stability mechanism survives once interactions are included. But the parallel is close enough to be worth noting as one of the more concrete ways this kind of superspace construction touches an open problem in physics.

\begin{acknowledgments}
The author gratefully acknowledges extensive assistance from Claude Sonnet 5~\cite{ClaudeSonnet5}, used throughout the development of this note both to work through and check derivations --- including several places where an initial calculation on either side contained a sign or algebra error later caught by cross-checking --- and to draft and revise portions of the expository text. All results and text were reviewed by the author, who takes full responsibility for the content.
\end{acknowledgments}

\appendix
%
%
\section{Notation}
\label{s:notation}

The extended set of Pauli matrices are defined by
\begin{equation}\label{e:PM.Pauli}
   \sigma_0 = I = \begin{pmatrix} 1 & 0 \\ 0 & 1 \end{pmatrix}
   \,,\qquad
   \sigma_1 = \begin{pmatrix} 0 & 1 \\ 1 & 0 \end{pmatrix}
   \,,\qquad
   \sigma_2 = \begin{pmatrix} 0 & -i \\ i & 0 \end{pmatrix}
   \,,\qquad
   \sigma_3 = \begin{pmatrix} 1 & 0 \\ 0 & -1 \end{pmatrix} \,.
\end{equation}
In 1+1D with a signature $(-,+)$, we choose a real representation of the $\gamma$-matrices as:
\begin{equation}\label{e:gamma01}
   \gamma^0=i\sigma_2=\begin{pmatrix}0&1\\-1&0\end{pmatrix}\>,\qquad
   \gamma^1=\sigma_1=\begin{pmatrix}0&1\\1&0\end{pmatrix}\>,
\end{equation}
where $\sigma_i$ are the Pauli matrices.  The gamma matrices then obey the Clifford algebra $\acomm{\gamma^\mu}{\gamma^\nu}=2\eta^{\mu\nu}$.  We then find that
\begin{equation}\label{e:PM}
   \gamma_0 = -\gamma^0\,,\quad \gamma_1 = \gamma^1 \>.
\end{equation}
We choose for the charge-conjugate operator:
\begin{equation}\label{e:Cdefinition}
    C = i\sigma_2 = \gamma^0 = \begin{pmatrix} 0 & 1 \\ -1 & 0 \end{pmatrix}\,,
    \qquad
    C^{-1} = - C \,.
\end{equation}
Then $C$ obeys the relation: $C \gamma^{\mu} C^{-1} = - (\gamma^{\mu})^T$, and has the properties:
\begin{equation}\label{e:Cproperties}
   C^T = -C\,,\quad
   C^2 = - \sigma_0 = - I\,,\quad
   C^{T}C = \sigma_0 = I\,,\quad
    C \gamma^0 = -\sigma_0 = - I\,, \quad C \gamma^1 = \sigma_3 \>.
\end{equation}

%
%
\section{Grassmann calculus}
\label{s:grassmann}

\begin{defn}[Grassmann variables]
A set of $N$ quantities $\theta_i$, $i = 1,2,\dotsc,N$ are Grassmann variables if they obey the anti-commutation relations:
\begin{equation}\label{ssym.e:thetarules}
   \acomm{\theta_i}{\theta_j} = 0 \>.
\end{equation}
Grassmann variables \emph{commute} with all other classical variables.
\end{defn}

\begin{defn}[differentiation]
Derivatives of Grassmann variables are taken to be \emph{left}-acting
and anti-commute.
\begin{equation}\label{ssym.e:diffrules}
   \acomm{\partial_i}{\theta_j}
   =
   \delta_{ij} \>,
   \qquad
   \acomm{\partial_i}{\partial_j}
   =
   0 \>,
   \qquad
   \partial_i
   \equiv
   \frac{\partial}{\partial \theta_i} \>.
\end{equation}
\end{defn}
This means, for example, that:
\begin{equation}
   \partial_i ( \, \theta_j \theta_k \, )
   =
   \delta_{ij} \, \theta_k - \delta_{ik} \, \theta_j \>.
\end{equation}

Integration of Grassmann variables has some unusual properties, which are given in the next two definitions.
\begin{defn}[integration]
The differential obeys the following rules:
\begin{equation}\label{ssym.e:differentialrules}
   \acomm{\rd\theta_i}{\theta_j}
   =
   0 \>,
   \qquad
   \acomm{\rd\theta_i}{\rd\theta_j}
   =
   0 \>,
   \qquad
   \acomm{\rd\theta_i}{\partial_j}
   =
   \delta_{ij} \>.
\end{equation}
Integrals are defined by:
\begin{equation}\label{ssym.e:intrules}
   \int \rd \theta = 0 \>,
   \qquad
   \int \rd \theta \, \theta = 1 \>.
\end{equation}
\end{defn}

The integration rules mean that a Grassmann Dirac $\delta$-function
can be defined by:
\begin{equation}\label{ssym.e:thetadeltaf}
   \delta(\theta) = \theta \>.
\end{equation}

\subsection{Complex Grassmann variables}

Up until now, we have been considering real Grassmann variables.  If $\theta_1$ and $\theta_2$ are two real Grassmann variables, complex Grassmann variables can be defined as follows:
\begin{alignat}{2}\label{ssym.e:thetacomplexdef}
   \theta
   &=
   ( \, \theta_1 + i \theta_2 \, ) / \sqrt{2} \>,
   & \qquad
   \theta_1
   &=
   ( \, \theta + \theta^{\ast} \, ) / \sqrt{2} \>,
   \\
   \theta^{\ast}
   &=
   ( \, \theta_1 - i \theta_2 \, ) / \sqrt{2} \>,
   & \qquad
   \theta_2
   &=
   ( \, \theta - \theta^{\ast} \, ) / i\sqrt{2} \>.
\end{alignat}
The complex variables satisfy the algebra:
\begin{equation}
   \acomm{\theta}{\theta}
   =
   \acomm{\theta^{\ast}}{\theta^{\ast}}
   =
   \acomm{\theta}{\theta^{\ast}}
   =
   0 \>.
\end{equation}
For the derivatives, we have:
\begin{alignat}{2}\label{ssym.e:thetaderivdef}
   \partial_\theta
   &=
   ( \, \partial_1 - i \partial_2 \, ) / \sqrt{2} \>,
   & \qquad
   \partial_1
   &=
   ( \, \partial_\theta + \partial_\theta^{\ast} \, ) / \sqrt{2} \>,
   \\
   \partial_\theta^{\ast}
   &=
   ( \, \partial_1 + i \partial_2 \, ) / \sqrt{2} \>,
   & \qquad
   \partial_2
   &=
   i ( \, \partial_\theta - \partial_\theta^{\ast} \, ) / \sqrt{2} \>.
\end{alignat}
The corresponding differentials transform the same way as $\theta$ and $\theta^{\ast}$ themselves,
\begin{equation}\label{ssym.e:thetacomplexdiffdef}
   \rd\theta
   =
   ( \, \rd\theta_1 + i \, \rd\theta_2 \, ) / \sqrt{2} \>,
   \qquad
   \rd\theta^{\ast}
   =
   ( \, \rd\theta_1 - i \, \rd\theta_2 \, ) / \sqrt{2} \>,
\end{equation}
so that, exactly as in Eq.~\eqref{ssym.e:differentialrules} for the real generators,
\begin{equation}
\begin{split}
   \acomm{\rd\theta}{\theta}
   =
   \acomm{\rd\theta}{\theta^{\ast}}
   =
   \acomm{\rd\theta^{\ast}}{\theta}
   =
   \acomm{\rd\theta^{\ast}}{\theta^{\ast}}
   &=
   0 \>,
   \\
   \acomm{\rd\theta}{\rd\theta}
   =
   \acomm{\rd\theta^{\ast}}{\rd\theta^{\ast}}
   =
   \acomm{\rd\theta}{\rd\theta^{\ast}}
   &=
   0 \>.
\end{split}
\end{equation}
Integrals over complex Grassmann variables are given by:
\begin{equation}
   \int \rd\theta =  \int \rd\theta^{\ast} = 0 \>,
   \qquad
   \int \rd\theta \, \theta
   =
   \int \rd\theta^{\ast} \, \theta^{\ast} = 1 \>,
\end{equation}
exactly Eq.~\eqref{ssym.e:intrules}'s single-variable rule, applied directly to $\theta,\theta^\ast$ as independent Grassmann generators in their own right. Also,
\begin{equation}
   \rd^2 \theta
   =
   \rd \theta \, \rd \theta^{\ast}
   =
   -i \, \rd\theta_1 \, \rd\theta_2 \>,
\end{equation}
using $\rd\theta_i^2=0$ and $\rd\theta_2\,\rd\theta_1=-\rd\theta_1\,\rd\theta_2$
from Eq.~\eqref{ssym.e:thetacomplexdiffdef}; equivalently, this is the determinant of the linear map taking $(\theta_1,\theta_2)$ to $(\theta,\theta^{\ast})$.
The complex conjugate of two Grassmann variables is defined by:
\begin{equation}\label{ssym.e:thetacomplexrule}
   [ \, \theta_1 \theta_2 \, ]^{\ast}
   =
   \theta_2^{\ast} \, \theta_1^{\ast} \>,
\end{equation}
which is similar to the Hermitian adjoint operation for matrices.

\subsection{Majorana Grassmann variables}

The two-component Majorana Grassmann spinors are defined by
\begin{equation}\label{e:GrassmannSpinorsDefs}
   \theta = \begin{pmatrix} \theta^1 \\ \theta^2 \end{pmatrix}\,,
   \quad
   \theta^T = (\, \theta^1, \theta^2 \,)\,,
   \quad
   \psi   = \begin{pmatrix} \psi^1 \\ \psi^2 \end{pmatrix}\,,
   \quad
   \psi^T = (\, \psi^1, \psi^2 \,)\,.
\end{equation}
where both $\theta^{\alpha}$ and $\psi^{\alpha}$ are \emph{real} Grassmann variables.
For Majorana spinors, the components $\theta^{\alpha}$ and $\psi^{\alpha}$ are labeled with \emph{upper} indices.  Different \emph{group} elements, $g_1,g_2,\dots$ are labeled with \emph{lower} indices.
From \eqref{e:Cdefinition}, charge conjugate spinors are given by:
\begin{equation}\label{e:C-GrassmannSpinorsDefs}
   C \theta = \begin{pmatrix} \theta^2 \\ -\theta^1 \end{pmatrix}\,,
   \quad
   \bar{\theta} \equiv \theta^T C = (\, -\theta^2,\theta^1 \,)\,,
   \quad
   C \psi = \begin{pmatrix} \psi^2 \\ -\psi^1 \end{pmatrix}\,,
   \quad
   \bar{\psi} \equiv \psi^T C = (\, -\psi^2,\psi^1 \,)\,.
\end{equation}
So for example,
\begin{equation}\label{e:GrassmannLengths}
   \bar{\theta} \theta = (\, -\theta^2,\theta^1 \,) \begin{pmatrix} \theta^1 \\ \theta^2 \end{pmatrix}
   =
   - \theta^2 \theta^1 + \theta^1 \theta^2
   =
   2 \theta^1 \theta^2 \>.
\end{equation}
Also,
\begin{equation}\label{e:barthet}
   \bar{\theta} \psi = (\, -\theta^2,\theta^1 \,) \begin{pmatrix} \psi^1 \\ \psi^2 \end{pmatrix}
   =
   -\theta^2 \psi^1 + \theta^1 \psi^2 \,,
   \qquad
   \bar{\psi} \theta = (\, -\psi^2,\psi^1 \,) \begin{pmatrix} \theta^1 \\ \theta^2 \end{pmatrix}
   =
   - \psi^2 \theta^1 + \psi^1 \theta^2
\end{equation}
Because $\theta^{\alpha} \psi^{\beta}$ anticommute, $\bar{\theta} \psi = \bar{\psi} \theta$.

For Majorana Grassmann integrals, we note that since
\begin{equation}\label{e:MajoranaDifferentials-deriv}
   \rd\theta = \begin{pmatrix} \rd\theta^1 \\ \rd\theta^2 \end{pmatrix}\,,
   \quad
   \rd\bar{\theta} = (\, \rd\theta^1, \rd\theta^2 \,) \, C = (\, -\rd\theta^2,\rd\theta^1 \,)\,,
\end{equation}
then
\begin{equation}\label{e:MajoranaDifferentials}
   \rd\bar\theta\,\rd\theta = 2 \, \rd\theta^{1} \rd\theta^2 \>.
\end{equation}
Then for example
\begin{equation}\label{e:berezintop}
   \int\rd^2\theta\,\theta^1\theta^2
   = \int\rd\theta^2\Bigl[\int\rd\theta^1\,\theta^1\theta^2\Bigr]
   = \int\rd\theta^2\,\theta^2 = 1\>.
\end{equation}

%
%
\section{Integration by parts in extended superspace}
\label{s:IBPextended}

In the $Z=0$ sector (\S\ref{ss:vacuum}), integrating $D_\alpha$ by parts inside $\int\rd^2x\,\rd^2\theta$ only ever produces an ordinary total $x$-derivative, since $D_\alpha=\partial_\alpha+i(C\gamma^\mu\theta)_\alpha\partial_\mu$ has just one ``spacetime-derivative-like'' piece. Once $\Psi$ is allowed to depend on $\xi$ (\S\ref{ss:Znotzero}), the extended $D_\alpha=\partial_\alpha+i(C\gamma^\mu\theta)_\alpha\partial_\mu-i(\gamma^1\theta)_\alpha\partial_\xi$ has a second such piece, and integrating by parts inside $\int\rd\ts=\int\rd x\,\rd^2\theta\,\rd\xi$ produces a genuinely new surface term in $\xi$ alongside the familiar one in $x$. This appendix proves the general result once, so it can be cited rather than re-derived each time it is needed.

\begin{theorem}[Integration by parts for $D_\alpha$ in extended superspace]
\label{thm:IBPextended}
For any superfield $X(x,\theta,\xi)=X_0+\theta^1X_1+\theta^2X_2+\theta^1\theta^2X_{12}$,
\begin{equation}\label{e:IBPextended-D1}
   \int\rd^2\theta\,D_1(X) = -i(\partial_0-\partial_1)\,X_2 \;+\; i\,\partial_\xi X_1\,,
\end{equation}
\begin{equation}\label{e:IBPextended-D2}
   \int\rd^2\theta\,D_2(X) = i(\partial_0+\partial_1)\,X_1 \;-\; i\,\partial_\xi X_2\,.
\end{equation}
Consequently,
\begin{equation}\label{e:IBPextended-full}
   \int\rd\ts\,D_1(X) = -i\!\int\!\rd\xi\,\Bigl[X_2\Bigr]_{x=-\infty}^{x=+\infty} + i\!\int\!\rd x\,\Bigl[X_1\Bigr]_{\xi\to-\infty}^{\xi\to+\infty}\,,
\end{equation}
\begin{equation}\label{e:IBPextended-full2}
   \int\rd\ts\,D_2(X) = i\!\int\!\rd\xi\,\Bigl[X_1\Bigr]_{x=-\infty}^{x=+\infty} - i\!\int\!\rd x\,\Bigl[X_2\Bigr]_{\xi\to-\infty}^{\xi\to+\infty}\,,
\end{equation}
each vanishing provided the bracketed component falls off at the corresponding boundary (or, for the $\xi$-boundary specifically, provided $\xi$ is instead taken compact and the component single-valued/periodic on it).
\end{theorem}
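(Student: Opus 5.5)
The plan is a direct component computation. Write $D_1,D_2$ in the explicit form \eqref{e:Dcomponents},
\[
D_1=\partial_{\theta^1}-i\theta^1(\partial_0-\partial_1)-i\theta^2\partial_\xi,\qquad
D_2=\partial_{\theta^2}-i\theta^2(\partial_0+\partial_1)-i\theta^1\partial_\xi .
\]
Apply each operator to $X=X_0+\theta^1X_1+\theta^2X_2+\theta^1\theta^2X_{12}$. Since $\int\rd^2\theta$ only picks out the coefficient of $\theta^1\theta^2$ (normalized by \eqref{e:berezintop}, $\int\rd^2\theta\,\theta^1\theta^2=1$), I only need to track the terms of $D_\alpha X$ proportional to $\theta^1\theta^2$. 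I will keep the component fields $X_i$ always written to the right of the $\theta$'s. Then no sign arises from their Grassmann parity, whether $X$ is even or odd; this is the same convention as in \eqref{e:Phi-def}.

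The key steps for $D_1$ are as follows.
\begin{itemize}
\item[(i)] The piece $\partial_{\theta^1}X=X_1+\theta^2X_{12}$ has no top component. This is the statement already used in \S\ref{ss:vacuum} that $\int\rd^2\theta\,\partial_\alpha(\cdot)=0$.
\item[(ii)] In $-i\theta^1(\partial_0-\partial_1)X$, only $\theta^1\cdot\theta^2X_2$ survives. It gives $-i\theta^1\theta^2(\partial_0-\partial_1)X_2$.
\item[(iii)] In $-i\theta^2\partial_\xi X$, only $\theta^2\theta^1X_1=-\theta^1\theta^2X_1$ survives. It gives $+i\theta^1\theta^2\partial_\xi X_1$.
\end{itemize}
Together these give \eqref{e:IBPextended-D1}.

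The same three steps for $D_2$ run as follows. The term $-i\theta^2\theta^1(\partial_0+\partial_1)X_1$ equals $+i\theta^1\theta^2(\partial_0+\partial_1)X_1$. The term $-i\theta^1\theta^2\partial_\xi X_2$ carries no reordering sign. Together these give \eqref{e:IBPextended-D2}. The only delicate point is the single anticommutation $\theta^2\theta^1=-\theta^1\theta^2$ in each case. I will also check that the $\theta$'s commute past $\partial_\mu,\partial_\xi$, which are bosonic derivatives.

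For \eqref{e:IBPextended-full} and \eqref{e:IBPextended-full2}, integrate the componentwise identities over $x$ and $\xi$. Every surviving term is a total derivative in one bosonic variable, so the fundamental theorem of calculus turns the $\partial_1X_2$ piece into $[X_2]_{x=-\infty}^{x=+\infty}$, integrated over $\xi$. It likewise turns the $\partial_\xi X_1$ piece into $[X_1]_{\xi\to-\infty}^{\xi\to+\infty}$, integrated over $x$. The $D_2$ case is the same with the roles of $X_1,X_2$ and the signs interchanged.

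The main obstacle I expect is bookkeeping rather than substance. The $\partial_0$ pieces, $\mp i\partial_0X_{2,1}$, are total time derivatives. I will dispose of them in the standard way of the variational argument of \S\ref{ss:vacuum}, where the variation vanishes at the initial and final times, so they do not appear among the displayed spatial and $\xi$ boundary terms. If $\rd\ts$ is read as including $\rd t$, I will state this explicitly.

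The final clause of the theorem then follows immediately. Each bracket vanishes if the indicated component falls off at that boundary. For compact periodic $\xi$, the integral of $\partial_\xi$ of a single-valued function over one period is zero.
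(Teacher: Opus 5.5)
Your computation is the paper's proof. It uses the same three-way split of $D_\alpha$: the $\partial_\alpha$ piece is killed because it lowers the $\theta$-degree, and the $\theta\,\bar\partial$ and $\theta\,\partial_\xi$ pieces each leave a single component after one reordering $\theta^2\theta^1=-\theta^1\theta^2$. Your top-component bookkeeping reproduces \eqref{e:IBPextended-D1}--\eqref{e:IBPextended-D2} exactly.

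When you write out the final integration, track the coefficients explicitly instead of asserting the displayed result. In the $D_1$ case, $-i(\partial_0-\partial_1)X_2=-i\partial_0X_2+i\partial_1X_2$. The $x$-boundary term is therefore $+i\int\rd\xi\,[X_2]_{x=-\infty}^{x=+\infty}$, not the $-i$ printed in \eqref{e:IBPextended-full}. So your remark that the $D_2$ case has ``the signs interchanged'' is only half right: the $x$-boundary coefficient is $+i$ for both $D_1$ and $D_2$, and only the $\xi$-boundary signs flip.

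This is a sign slip in the statement itself. The paper's one-line step (``integrating \dots\ turns each total derivative into the boundary term'') passes over it as well. It is harmless downstream, because Corollary~\ref{cor:IBPextended} uses only the vanishing of the $x$-pieces and the $\xi$-piece of \eqref{e:IBPextended-full2}, which is correct.

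Your explicit handling of the $\partial_0$ contributions is more careful than the paper. The displayed formulas drop these terms without comment, and they are time-boundary terms only if $\rd\ts$ includes $\rd t$.
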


\begin{proof}
Write $D_\alpha=\partial_\alpha-i\theta^\alpha\bar\partial_\alpha-i\theta^{\bar\alpha}\partial_\xi$, where $\bar\partial_1\equiv\partial_0-\partial_1$, $\bar\partial_2\equiv\partial_0+\partial_1$, and $\bar\alpha$ denotes the index other than $\alpha$; this is \eqref{e:Ddef} rewritten using $(C\gamma^0\theta)_\alpha=-\theta^\alpha$, $(C\gamma^1\theta)_1=\theta^1$, $(C\gamma^1\theta)_2=-\theta^2$ (from \eqref{e:Cproperties}), and $(\gamma^1\theta)_\alpha=\theta^{\bar\alpha}$. Since $\int\rd^2\theta\,[\cdots]$ extracts the $\theta^1\theta^2$ coefficient (\S\ref{s:grassmann}), each of the three pieces of $D_\alpha(X)$ is handled separately.

\emph{The $\partial_\alpha$ piece.} $\int\rd^2\theta\,\partial_\alpha X=0$ identically: differentiating with respect to a Grassmann variable always lowers the $\theta$-degree by one, so $\partial_\alpha X$ has no $\theta^1\theta^2$ term for it to contribute.

\emph{The $-i\theta^\alpha\bar\partial_\alpha$ piece.} For $\alpha=1$: $\theta^1X=\theta^1X_0+\theta^1\theta^2X_2$ (the $\theta^1X_1$ and $\theta^1\theta^2X_{12}$ terms die against the repeated $\theta^1$), so its $\theta^1\theta^2$ coefficient is $X_2$, giving $-i\bar\partial_1X_2$. For $\alpha=2$: $\theta^2X=\theta^2X_0-\theta^1\theta^2X_1$ (using $\theta^2\theta^1=-\theta^1\theta^2$), so its $\theta^1\theta^2$ coefficient is $-X_1$, giving $-i\bar\partial_2(-X_1)=i\bar\partial_2X_1$.

\emph{The $-i\theta^{\bar\alpha}\partial_\xi$ piece.} Identical to the previous piece with $\bar\partial_\alpha\to\partial_\xi$ and $\alpha\to\bar\alpha$: for $\alpha=1$ ($\bar\alpha=2$) this is $-i\partial_\xi\times(-X_1)=i\partial_\xi X_1$; for $\alpha=2$ ($\bar\alpha=1$) this is $-i\partial_\xi X_2$.

Summing the surviving pieces for each $\alpha$ gives \eqref{e:IBPextended-D1}--\eqref{e:IBPextended-D2}. Integrating these over $x$ and $\xi$ turns each total derivative into the boundary term evaluated at the corresponding limits, giving \eqref{e:IBPextended-full}--\eqref{e:IBPextended-full2}; a boundary term vanishes if the bracketed quantity falls off at that boundary, or automatically if the corresponding coordinate is instead compact and the quantity is single-valued on it (the total derivative then integrates to zero over one period).
\end{proof}

\begin{corollary}
\label{cor:IBPextended}
For $X=(D_1\Psi)(\delta\Psi)$, a direct computation from \eqref{e:Ddef} and \eqref{e:Phi-def} (the same method as \eqref{e:D1D2Psi}, now keeping $\partial_\xi\phi,\partial_\xi\psi\neq0$) gives $D_1\Psi=\psi^2-i\theta^1(\partial_0-\partial_1)\phi+i\theta^2(F-\partial_\xi\phi)+i\theta^1\theta^2[\cdots]$, so with $\delta\Psi=\delta\phi-\theta^2\delta\psi^1+\theta^1\delta\psi^2+i\theta^1\theta^2\delta F$,
\begin{equation}\label{e:X2-corollary}
   X\big|_{\theta^2} = \psi^2\,\delta\psi^1 + i\bigl(F-\partial_\xi\phi\bigr)\delta\phi\,.
\end{equation}
By \eqref{e:IBPextended-full2}, the $\xi$-boundary piece of $\int\rd\ts\,D_2\bigl[(D_1\Psi)(\delta\Psi)\bigr]$ is therefore
\begin{equation}\label{e:z1.surfaceterm}
   -i\int\rd x\,\Bigl[X\big|_{\theta^2}\Bigr]_{\xi\to-\infty}^{\xi\to+\infty}
   =
   \int\rd x\,\Bigl[\bigl(F-\partial_\xi\phi\bigr)\delta\phi - i\,\psi^2\,\delta\psi^1\Bigr]_{\xi\to-\infty}^{\xi\to+\infty}\,,
\end{equation}
while the $x$-boundary piece is identical in form to the $Z=0$ case and vanishes under the same assumption, $\delta\Psi\to0$ as $x\to\pm\infty$.
\end{corollary}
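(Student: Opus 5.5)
The plan is to combine a component computation of $X=(D_1\Psi)(\delta\Psi)$ with Theorem~\ref{thm:IBPextended}. By \eqref{e:IBPextended-full2}, the $\xi$-boundary piece of $\int\rd\ts\,D_2(X)$ is $-i\int\rd x\,[X_2]_{\xi\to-\infty}^{\xi\to+\infty}$. So the only real work is to extract the $\theta^2$ coefficient $X_2\equiv X|_{\theta^2}$, in the convention $X=X_0+\theta^1X_1+\theta^2X_2+\theta^1\theta^2X_{12}$. We never need the $\theta^1\theta^2$ component of $D_1\Psi$. Multiplying it by anything but a $\theta$-independent factor lands above degree two, and times $\delta\phi$ it only feeds $X_{12}$. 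That is why the corollary may leave it as $[\cdots]$.

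First, compute $D_1\Psi$ through linear order in $\theta$, using the component form \eqref{e:Dcomponents}, $D_1=\partial_{\theta^1}-i\theta^1(\partial_0-\partial_1)-i\theta^2\partial_\xi$, on $\Psi=\phi+\theta^1\psi^2-\theta^2\psi^1+i\theta^1\theta^2F$. The three contributions are as follows.
\begin{itemize}
\item $\partial_{\theta^1}\Psi=\psi^2+i\theta^2F$; the left derivative hits $\theta^1$ first, so there is no sign.
\item The $-i\theta^1(\partial_0-\partial_1)$ term acting on $\phi$ gives $-i\theta^1(\partial_0-\partial_1)\phi$.
\item The $-i\theta^2\partial_\xi$ term acting on $\phi$ gives $-i\theta^2\partial_\xi\phi$.
\end{itemize}
All other terms are quadratic in $\theta$ or vanish. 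This reproduces $D_1\Psi=\psi^2-i\theta^1(\partial_0-\partial_1)\phi+i\theta^2(F-\partial_\xi\phi)+O(\theta^1\theta^2)$. As a check, this agrees with the slot dictionary \eqref{e:Q1-on-Psi} up to the expected sign flips between $D_\alpha$ and $\mathcal{Q}_\alpha$ in the $\theta$-linear terms.

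Second, multiply by $\delta\Psi=\delta\phi+\theta^1\delta\psi^2-\theta^2\delta\psi^1+i\theta^1\theta^2\delta F$ and collect the $\theta^2$ terms. Only two products contribute:
\begin{itemize}
\item $i\theta^2(F-\partial_\xi\phi)\,\delta\phi$, which needs no reordering;
\item $\psi^2\cdot(-\theta^2\delta\psi^1)$, where moving $\theta^2$ to the left past the odd $\psi^2$ costs a sign and gives $+\theta^2\psi^2\delta\psi^1$.
\end{itemize}
This gives \eqref{e:X2-corollary}. Multiplying by $-i$ as \eqref{e:IBPextended-full2} prescribes gives $(F-\partial_\xi\phi)\delta\phi-i\psi^2\delta\psi^1$, which is \eqref{e:z1.surfaceterm}.

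For the $x$-boundary piece, I will note that \eqref{e:IBPextended-full2} gives $i\int\rd\xi\,[X_1]_{x=-\infty}^{x=+\infty}$. Every term of $X_1$ carries a factor $\delta\phi$ or $\delta\psi^2$, coming from $\psi^2\theta^1\delta\psi^2$ and $-i\theta^1(\partial_0-\partial_1)\phi\,\delta\phi$. Hence it vanishes under $\delta\Psi\to0$ at $x=\pm\infty$, exactly as in \S\ref{ss:vacuum}.

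The only step where errors can creep in is the Grassmann sign bookkeeping in the $\theta^2$ extraction. These are the reordering of $\psi^2$ past $\theta^2$ and the sign of $\theta^2$ in $\delta\Psi$ ($\bar\theta\delta\psi=\theta^1\delta\psi^2-\theta^2\delta\psi^1$). I will cross-check the latter against \eqref{e:barthet}, and check the former by redoing the product with all $\theta$'s moved left before reading off coefficients.
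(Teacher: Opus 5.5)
Your proposal is correct and is essentially the paper's own argument: compute $D_1\Psi$ through linear order in $\theta$, pick out the $\theta^2$ coefficient of $(D_1\Psi)(\delta\Psi)$, and insert it into the $\xi$-boundary term of Theorem~\ref{thm:IBPextended}; the only sign in that extraction comes from moving $\theta^2$ past the odd $\psi^2$. Your remark that every term of $X_1$ carries a $\delta\phi$ or $\delta\psi^2$ makes the paper's brief claim about the $x$-boundary piece concrete: explicitly $X_1=-\psi^2\delta\psi^2-i(\partial_0-\partial_1)\phi\,\delta\phi$, which contains no $\partial_\xi$ and so matches the $Z=0$ form.
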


\bibliography{WZmodelLecture}

\begin{thebibliography}{10}%
\makeatletter
\providecommand \@ifxundefined [1]{%
 \@ifx{#1\undefined}
}%
\providecommand \@ifnum [1]{%
 \ifnum #1\expandafter \@firstoftwo
 \else \expandafter \@secondoftwo
 \fi
}%
\providecommand \@ifx [1]{%
 \ifx #1\expandafter \@firstoftwo
 \else \expandafter \@secondoftwo
 \fi
}%
\providecommand \natexlab [1]{#1}%
\providecommand \enquote  [1]{``#1''}%
\providecommand \bibnamefont  [1]{#1}%
\providecommand \bibfnamefont [1]{#1}%
\providecommand \citenamefont [1]{#1}%
\providecommand \href@noop [0]{\@secondoftwo}%
\providecommand \href [0]{\begingroup \@sanitize@url \@href}%
\providecommand \@href[1]{\@@startlink{#1}\@@href}%
\providecommand \@@href[1]{\endgroup#1\@@endlink}%
\providecommand \@sanitize@url [0]{\catcode `\\12\catcode `\$12\catcode
  `\&12\catcode `\#12\catcode `\^12\catcode `\_12\catcode `\%12\relax}%
\providecommand \@@startlink[1]{}%
\providecommand \@@endlink[0]{}%
\providecommand \url  [0]{\begingroup\@sanitize@url \@url }%
\providecommand \@url [1]{\endgroup\@href {#1}{\urlprefix }}%
\providecommand \urlprefix  [0]{URL }%
\providecommand \Eprint [0]{\href }%
\providecommand \doibase [0]{https://doi.org/}%
\providecommand \selectlanguage [0]{\@gobble}%
\providecommand \bibinfo  [0]{\@secondoftwo}%
\providecommand \bibfield  [0]{\@secondoftwo}%
\providecommand \translation [1]{[#1]}%
\providecommand \BibitemOpen [0]{}%
\providecommand \bibitemStop [0]{}%
\providecommand \bibitemNoStop [0]{.\EOS\space}%
\providecommand \EOS [0]{\spacefactor3000\relax}%
\providecommand \BibitemShut  [1]{\csname bibitem#1\endcsname}%
\let\auto@bib@innerbib\@empty
\bibitem [{\citenamefont {Wess}\ and\ \citenamefont
  {Zumino}(1974)}]{WessZumino1974}%
  \BibitemOpen
  \bibfield  {author} {\bibinfo {author} {\bibfnamefont {J.}~\bibnamefont
  {Wess}}\ and\ \bibinfo {author} {\bibfnamefont {B.}~\bibnamefont {Zumino}},\
  }\bibfield  {title} {\bibinfo {title} {Supergauge transformations in four
  dimensions},\ }\href {https://doi.org/10.1016/0550-3213(74)90355-1}
  {\bibfield  {journal} {\bibinfo  {journal} {Nucl. Phys. B}\ }\textbf
  {\bibinfo {volume} {70}},\ \bibinfo {pages} {39} (\bibinfo {year}
  {1974})}\BibitemShut {NoStop}%
\bibitem [{\citenamefont {Witten}\ and\ \citenamefont
  {Olive}(1978)}]{WittenOlive1978}%
  \BibitemOpen
  \bibfield  {author} {\bibinfo {author} {\bibfnamefont {E.}~\bibnamefont
  {Witten}}\ and\ \bibinfo {author} {\bibfnamefont {D.}~\bibnamefont {Olive}},\
  }\bibfield  {title} {\bibinfo {title} {Supersymmetry algebras that include
  topological charges},\ }\href {https://doi.org/10.1016/0370-2693(78)90357-X}
  {\bibfield  {journal} {\bibinfo  {journal} {Phys. Lett. B}\ }\textbf
  {\bibinfo {volume} {78}},\ \bibinfo {pages} {97} (\bibinfo {year}
  {1978})}\BibitemShut {NoStop}%
\bibitem [{\citenamefont {Gates}\ \emph {et~al.}(1983)\citenamefont {Gates},
  \citenamefont {Grisaru}, \citenamefont {Ro{\v{c}}ek},\ and\ \citenamefont
  {Siegel}}]{GatesGrisaruRocekSiegel1983}%
  \BibitemOpen
  \bibfield  {author} {\bibinfo {author} {\bibfnamefont {J.}~\bibnamefont
  {Gates}, \bibfnamefont {S.~James}}, \bibinfo {author} {\bibfnamefont {M.~T.}\
  \bibnamefont {Grisaru}}, \bibinfo {author} {\bibfnamefont {M.}~\bibnamefont
  {Ro{\v{c}}ek}},\ and\ \bibinfo {author} {\bibfnamefont {W.}~\bibnamefont
  {Siegel}},\ }\href@noop {} {\emph {\bibinfo {title} {Superspace, or One
  Thousand and One Lessons in Supersymmetry}}},\ \bibinfo {series} {Frontiers
  in Physics}, Vol.~\bibinfo {volume} {58}\ (\bibinfo  {publisher}
  {Benjamin/Cummings},\ \bibinfo {address} {Reading, MA},\ \bibinfo {year}
  {1983})\ \bibinfo {note} {free online edition with corrections and bookmarks
  at \url{https://arxiv.org/abs/hep-th/0108200}},\ \Eprint
  {https://arxiv.org/abs/hep-th/0108200} {arXiv:hep-th/0108200} \BibitemShut
  {NoStop}%
\bibitem [{\citenamefont {Wess}\ and\ \citenamefont
  {Bagger}(1992)}]{WessBagger1992}%
  \BibitemOpen
  \bibfield  {author} {\bibinfo {author} {\bibfnamefont {J.}~\bibnamefont
  {Wess}}\ and\ \bibinfo {author} {\bibfnamefont {J.}~\bibnamefont {Bagger}},\
  }\href@noop {} {\emph {\bibinfo {title} {Supersymmetry and Supergravity}}},\
  \bibinfo {edition} {2nd}\ ed.\ (\bibinfo  {publisher} {Princeton University
  Press},\ \bibinfo {address} {Princeton, NJ},\ \bibinfo {year}
  {1992})\BibitemShut {NoStop}%
\bibitem [{\citenamefont {Bogomolny}(1976)}]{Bogomolny1976}%
  \BibitemOpen
  \bibfield  {author} {\bibinfo {author} {\bibfnamefont {E.~B.}\ \bibnamefont
  {Bogomolny}},\ }\bibfield  {title} {\bibinfo {title} {Stability of classical
  solutions},\ }\href@noop {} {\bibfield  {journal} {\bibinfo  {journal} {Sov.
  J. Nucl. Phys.}\ }\textbf {\bibinfo {volume} {24}},\ \bibinfo {pages} {449}
  (\bibinfo {year} {1976})},\ \bibinfo {note} {[Yad. Fiz. \textbf{24}, 861
  (1976)]}\BibitemShut {NoStop}%
\bibitem [{\citenamefont {Prasad}\ and\ \citenamefont
  {Sommerfield}(1975)}]{PrasadSommerfield1975}%
  \BibitemOpen
  \bibfield  {author} {\bibinfo {author} {\bibfnamefont {M.~K.}\ \bibnamefont
  {Prasad}}\ and\ \bibinfo {author} {\bibfnamefont {C.~M.}\ \bibnamefont
  {Sommerfield}},\ }\bibfield  {title} {\bibinfo {title} {Exact classical
  solution for the 't hooft monopole and the julia-zee dyon},\ }\href
  {https://doi.org/10.1103/PhysRevLett.35.760} {\bibfield  {journal} {\bibinfo
  {journal} {Phys. Rev. Lett.}\ }\textbf {\bibinfo {volume} {35}},\ \bibinfo
  {pages} {760} (\bibinfo {year} {1975})}\BibitemShut {NoStop}%
\bibitem [{\citenamefont {Jackiw}\ and\ \citenamefont
  {Rebbi}(1976)}]{JackiwRebbi1976}%
  \BibitemOpen
  \bibfield  {author} {\bibinfo {author} {\bibfnamefont {R.}~\bibnamefont
  {Jackiw}}\ and\ \bibinfo {author} {\bibfnamefont {C.}~\bibnamefont {Rebbi}},\
  }\bibfield  {title} {\bibinfo {title} {Solitons with fermion number 1/2},\
  }\href {https://doi.org/10.1103/PhysRevD.13.3398} {\bibfield  {journal}
  {\bibinfo  {journal} {Phys. Rev. D}\ }\textbf {\bibinfo {volume} {13}},\
  \bibinfo {pages} {3398} (\bibinfo {year} {1976})}\BibitemShut {NoStop}%
\bibitem [{\citenamefont {Appelquist}\ \emph {et~al.}(2001)\citenamefont
  {Appelquist}, \citenamefont {Cheng},\ and\ \citenamefont
  {Dobrescu}}]{AppelquistChengDobrescu2001}%
  \BibitemOpen
  \bibfield  {author} {\bibinfo {author} {\bibfnamefont {T.}~\bibnamefont
  {Appelquist}}, \bibinfo {author} {\bibfnamefont {H.-C.}\ \bibnamefont
  {Cheng}},\ and\ \bibinfo {author} {\bibfnamefont {B.~A.}\ \bibnamefont
  {Dobrescu}},\ }\bibfield  {title} {\bibinfo {title} {Bounds on universal
  extra dimensions},\ }\href {https://doi.org/10.1103/PhysRevD.64.035002}
  {\bibfield  {journal} {\bibinfo  {journal} {Phys. Rev. D}\ }\textbf {\bibinfo
  {volume} {64}},\ \bibinfo {pages} {035002} (\bibinfo {year} {2001})},\
  \Eprint {https://arxiv.org/abs/hep-ph/0012100} {arXiv:hep-ph/0012100}
  \BibitemShut {NoStop}%
\bibitem [{\citenamefont {Servant}\ and\ \citenamefont
  {Tait}(2003)}]{ServantTait2003}%
  \BibitemOpen
  \bibfield  {author} {\bibinfo {author} {\bibfnamefont {G.}~\bibnamefont
  {Servant}}\ and\ \bibinfo {author} {\bibfnamefont {T.~M.~P.}\ \bibnamefont
  {Tait}},\ }\bibfield  {title} {\bibinfo {title} {Is the lightest
  kaluza--klein particle a viable dark matter candidate?},\ }\href
  {https://doi.org/10.1016/S0550-3213(02)01012-X} {\bibfield  {journal}
  {\bibinfo  {journal} {Nucl. Phys. B}\ }\textbf {\bibinfo {volume} {650}},\
  \bibinfo {pages} {391} (\bibinfo {year} {2003})},\ \Eprint
  {https://arxiv.org/abs/hep-ph/0206071} {arXiv:hep-ph/0206071} \BibitemShut
  {NoStop}%
\bibitem [{\citenamefont {{Anthropic}}(2026)}]{ClaudeSonnet5}%
  \BibitemOpen
  \bibfield  {author} {\bibinfo {author} {\bibnamefont {{Anthropic}}},\
  }\href@noop {} {\bibinfo {title} {{C}laude {S}onnet 5}},\ \bibinfo
  {howpublished} {Large language model} (\bibinfo {year} {2026}),\ \bibinfo
  {note} {model id \texttt{claude-sonnet-5}; used October 2026.
  \url{https://www.anthropic.com}}\BibitemShut {NoStop}%
\end{thebibliography}%
\end{document}